\newcommand{\R}{\ensuremath{\mathbb{R}}}
\newcommand{\po}{\ensuremath{\mathcal{P}}}
\newcommand{\Tr}{\ensuremath{\mathcal{T}}}
\newcommand{\var}{\ensuremath{\mathcal{V}}}
\newcommand{\h}{\ensuremath{\mathcal{H}}}
\newcommand{\ttt}{\mathbf{t}}
\newcommand{\ff}{\mathbf{f}}
\newcommand{\0}{^{\mathbf{0}}}
\newcommand{\1}{^{\mathbf{1}}}
\DeclareMathOperator{\sign}{sign}
\theoremstyle{plain}
\newtheorem{conjecture}[theorem]{Conjecture}
\newtheorem{obs}[theorem]{Observation}
\newcommand{\pth}[1]{\left(#1\right)}
\newcommand{\vc}[1]{\overrightarrow{#1}}
\newcommand*{\eqdef}{\stackrel{\text{\tiny{def}}}{=}}
\definecolor{codegreen}{rgb}{0,0.3,0}
\definecolor{codegray}{rgb}{0.5,0.5,0.5}
\definecolor{codedarkred}{rgb}{0.54,0,0}
\definecolor{backcolour}{rgb}{0.96,0.96,0.94}
\lstdefinestyle{mystyle}{
    backgroundcolor=\color{backcolour},   
    commentstyle=\color{codegreen},
    keywordstyle=\bf\color{codedarkred},
    numberstyle=\tiny\color{codegray},
    stringstyle=\color{codedarkred},
    basicstyle=\footnotesize,
    breakatwhitespace=false,         
    breaklines=true,                 
    captionpos=b,                    
    keepspaces=true,                 
    numbers=left,                    
    numbersep=5pt,                  
    showspaces=false,                
    showstringspaces=false,
    showtabs=false,                  
    tabsize=2
}
\title{An Experimental Study of Forbidden Patterns in Geometric Permutations by Combinatorial Lifting}
\titlerunning{Forbidden patterns in geometric permutations by combinatorial lifting}
\author{Xavier Goaoc}{Universit\'e de Lorraine, CNRS, Inria, LORIA, F-54000 Nancy, France}{xavier.goaoc@loria.fr}{}{Supported by Institut Universitaire de France.}
\author{Andreas Holmsen}{Department of Mathematical Sciences, KAIST, Daejeon, South Korea}{andreash@kaist.edu}{}{Supported by Basic Science Research Program through the National Research Foundation of Korea (NRF) funded by the Ministry of Education (NRF-2016R1D1A1B03930998).}
\author{Cyril Nicaud}{Universit\'e Paris-Est, LIGM (UMR 8049), CNRS, ENPC, ESIEE, UPEM, F-77454, Marne-la-Vall\'ee, France.}{cyril.nicaud@u-pem.fr}{}{}
\authorrunning{Goaoc X., Holmsen A., and Nicaud C.}
\keywords{Geometric permutation; Emptiness testing of semi-algebraic sets; Computer-aided proof}
\begin{document}

\maketitle

\begin{abstract}
  We study the problem of deciding if a given triple of permutations
  can be realized as geometric permutations of disjoint convex sets in
  $\mathbb{R}^3$. We show that this question, which is equivalent to deciding
  the emptiness of certain semi-algebraic sets bounded by cubic
  polynomials, can be ``lifted'' to a purely combinatorial problem. We
  propose an effective algorithm for that problem, and use it to gain
  new insights into the structure of geometric permutations.
\end{abstract}

\section{Introduction}

Consider pairwise disjoint convex sets $C_1, C_2, \ldots, C_n$ and
lines $\ell_1, \ell_2, \ldots, \ell_k$ in~$\R^d$, where every line
intersects every set. Each line $\ell_i$ defines two orders on the
sets, namely the orders in which the two orientations of $\ell_i$ meet
the sets; this pair of orders, one the reverse of the other, are
identified to form the \emph{geometric permutation} realized by
$\ell_i$ on $C_1, C_2, \ldots, C_n$. Going in the other direction, one
may ask if a given family of permutations can occur as geometric
permutations of a family of pairwise disjoint convex sets in $\R^d$,
\emph{i.e.}  whether it is \emph{geometrically realizable} in~$\R^d$.

In $\R^2$ there exist pairs of permutations that are unrealizable,
while in $\R^3$, every pair of permutations is realizable by a family
of segments with endpoints on two skew lines. The simplest non-trivial
question is therefore to understand which triples of permutations are
geometrically realizable. This question is equivalent to testing the
non-emptiness of certain semi-algebraic sets bounded by cubic
polynomials. We show that the structure of these polynomials allow to
"lift" this algebraic question to a purely combinatorial one, then
propose an algorithm for that combinatorial problem, and present some
new results on geometric permutations obtained with its assistance.

\subparagraph{Conventions.}

To simplify the discussion, we work with \emph{oriented} lines and
thus with \emph{permutations}, in place of the non-oriented lines and
geometric permutations customary in this line of inquiry.  We
represent permutations by words such as $1423$ or $badc$, to be
interpreted as follows. The letters of the word are the elements being
permuted and they come with a natural order, namely $<$ for integer
and the alphabetical order for letters. The word gives the sequence of
images of the elements by increasing order; for example, $312$ codes
the permutation mapping $1$ to $3$, $2$ to $1$ and $3$ to $2$, and
$badc$ codes the permutation exchanging $a$ with $b$ and $c$
with~$d$. The \emph{size} of a permutation is the number of elements
being permuted, that is the length of this word. We say that a triple
of permutations is \emph{realizable} (resp. \emph{forbidden}) to mean
that it is realizable (resp. not realizable) in $\R^3$.

\subsection{Contributions}\label{s:results}

Our results are of two types, methodological and geometrical.

\subparagraph{Combinatorial lifting.}

Our first contribution is a new approach for deciding the emptiness of
a semi-algebraic set with a special structure. We describe it for the
geometric realizability problem, here and in
Section~\ref{s:combilift}, but stress that it applies more broadly.

\bigskip

As spelled out in Section~\ref{s:parameter}, deciding if a triple of
permutations is realizable amounts to testing the emptiness of a
semi-algebraic set $R \subseteq \R^n$. Let $u_1, u_2, \ldots, u_n$
denote the variables and $P_1, P_2, \ldots, P_m$ the polynomials used
in a Boolean formula defining $R$. The structure we take advantage
of is that here, each $P_k$ can be written as a product of terms, each
of which is of the form $u_i-u_j$, $u_i-1$, or $u_i-f(u_j)$, with
$f(t)= \frac1{1-t}$.  If only terms of the form $u_i-u_j$ or $u_i-1$
occur, then we can test the emptiness of~$R$ by examining the possible
orders on $(1, u_1, u_2, \ldots, u_n)$. We propose to handle the terms
of the form $u_i-f(u_j)$ in the same way, by exploring the orders that
can arise on $(1,u_1,f(u_1), u_2, f(u_2), \ldots, u_n,f(u_n))$.

\medskip

The main difficulty in this approach is to restrict the exploration to
the orders that can be realized by a sequence of the form
$(1,u_1,f(u_1), u_2, f(u_2), \ldots, u_n,f(u_n))$. This turns out to
be easier if we extend the lifting to
\[ \Lambda: \left\{\begin{array}{rcl} (\R\setminus\{0,1\})^{n} & \to & \R^{3n}
\\ (u_1,u_2, \ldots, u_n) & \mapsto & (u_1,f(u_1),f^{(2)}(u_1),
\ldots, u_n, f(u_n),f^{(2)}(u_n)).\end{array}\right.\]
This extended lifting allows to take advantage of the facts that
$f^{(3)}=f \circ f \circ f$ is the identity, that $f$ permutes
circularly the intervals $(-\infty,0)$, $(0,1)$ and $(1, \infty)$, and
that $f$ is increasing on each of them. In
Proposition~\ref{p:cellR3n}, we essentially show that \emph{an order on the $3n$
  lifted variables can be realized by a point of~$\Lambda(\R^{3n})$ if
  and only if it is compatible with the action of~$f$, as captured by
  these properties.}

\subparagraph{Algorithm.}

Our second contribution is an algorithm that puts the combinatorial
lifting in practice, and decides if a triple of permutations is
realizable in $O\pth{6^nn^{10}}$ time and $O(n^2)$ space in the
worst-case. We provide an implementation in Python (see  Appendices~\ref{a:raw} and~\ref{a:code}).

\subparagraph{New geometric results.}

Our remaining contributions are new geometric results obtained with
the aid of our implementation. A first systematic exploration reveals:

\begin{theorem}\label{thm:five}
  Every triple of permutations of size $5$ is geometrically realizable in~$\R^3$.
\end{theorem}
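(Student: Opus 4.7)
The plan is to use the algorithm of Section~\ref{s:combilift}, which decides realizability in~$\R^3$, as a deciding oracle, and to run it exhaustively on every triple of permutations of size $5$, checking that the answer is always ``realizable''. There are $(5!)^3 = 1{,}728{,}000$ ordered triples, which the announced running time of $O(6^n n^{10})$ makes perfectly tractable for $n=5$; nonetheless I would first reduce the work by quotienting out the natural symmetries that preserve realizability.

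The relevant symmetries are: simultaneous relabelling of the five elements in all three permutations (the $n$ convex sets carry no intrinsic labels), independent reversal of each permutation (a geometric permutation is an unordered pair of opposite linear orders, so each $\pi_i$ is defined up to reversal), and simultaneous reordering of the three permutations in the triple (the three lines are unlabelled). Using the relabelling symmetry I would normalise $\pi_1 = \mathrm{id}$, and then pick a canonical representative of $(\pi_2,\pi_3)$ in its orbit under the residual group, leaving a few hundred triples to check. Each representative is then passed to the implementation, and if every one is reported realizable, Theorem~\ref{thm:five} follows.

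The main obstacle is not a mathematical one but a matter of confidence in the computer computation. I would mitigate the two principal risks as follows: (i) bugs in the symmetry reduction would be caught by cross-checking against an unreduced enumeration on a random sample of triples, verifying that each sampled triple is equivalent to one of the retained representatives and that realizability is consistent along the equivalence; (ii) bugs in the implementation of the combinatorial lifting of Proposition~\ref{p:cellR3n} or of the polynomial-emptiness test would be caught by validating against small cases for which an explicit geometric realization can be written down by hand. Making the raw output and source code publicly available, as done in Appendices~\ref{a:raw} and~\ref{a:code}, is essential so that the verification is independently reproducible.
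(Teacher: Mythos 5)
Your plan is essentially the paper's proof: Theorem~\ref{thm:five} is obtained by an exhaustive, symmetry-reduced computer check of all size-$5$ triples, where each triple is tested via the tagged-pattern machinery over the $8$ reversals and all placements of the tags $\0$ and $\1$ (the paper phrases the per-triple check as finding a tagging that avoids the minimally forbidden tagged patterns classified in Proposition~\ref{p:tag}, whose completeness up to size $5$ comes from the same exhaustive algorithm). The only difference worth noting is that the paper backs every positive answer with an explicit rational realization extracted from the algorithm and verified independently, so the positive direction rests on checkable certificates rather than on trusting the implementation plus sampling as in your validation plan.
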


\noindent
The smallest known triple of geometric permutations forbidden in
$\R^3$ has size~$6$ (see Section~\ref{s:discussion}), so
Theorem~\ref{thm:five} proves that it is minimal. We also
  obtained the complete list of forbidden triples of size~$6$ (see
  Appendix~\ref{a:forbid6}). Interestingly, although everything is realizable
up to size~$5$, something can be said on geometric permutations of
size~$4$. Recall that the \emph{side operator} $(pq) \odot (rs)$ of
the two lines $(pq)$ and $(rs)$, oriented respectively from $p$ to $q$
and from $r$ to $s$, is the orientation of the tetrahedron $pqrs$; it
captures the mutual disposition of the two lines. We prove:

\begin{theorem}\label{thm:so}
  Let $\ell_1$ and $\ell_2$ be two oriented lines intersecting four
  pairwise disjoint convex sets in the order~$1234$. Any oriented line
  $\ell_3$ that intersects those four sets in the order $2143$
  satisfies $\ell_1 \odot \ell_3 = \ell_2 \odot \ell_3$.
\end{theorem}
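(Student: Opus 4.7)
My plan is to reduce Theorem~\ref{thm:so} to a planar incidence lemma, which I state first: \emph{no four pairwise disjoint convex sets in the plane admit both a $1234$-transversal and a $2143$-transversal}. Granting this, I argue as follows. The lemma implies that no $1234$-transversal of $C_1,\ldots,C_4$ is coplanar with $\ell_3$: otherwise the plane $\pi$ it spans with $\ell_3$ would intersect the $C_i$ in four pairwise disjoint convex subsets of $\pi$ admitting the two forbidden transversals. Hence $\ell_1\odot\ell_3,\ell_2\odot\ell_3\neq 0$. Then I continuously deform $\ell_1$ into $\ell_2$ through a path $\ell^{(s)}$ of $1234$-transversals, using the semi-algebraic parameterization of Section~\ref{s:parameter}; along such a path the integer-valued function $s\mapsto\ell^{(s)}\odot\ell_3\in\{-1,0,+1\}$ is continuous and never vanishes, hence is constant, proving $\ell_1\odot\ell_3=\ell_2\odot\ell_3$.

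For the planar lemma, pick $p_i\in\ell\cap E_i$ and $q_i\in\ell_3\cap E_i$; since each $E_i$ is convex, the segment $[p_i,q_i]$ lies in $E_i$, so the four segments $[p_1,q_1],\ldots,[p_4,q_4]$ are pairwise disjoint. If $\ell\parallel\ell_3$, pairwise disjointness of segments between two parallel lines forces the orders of the $p_i$ and $q_i$ along their respective lines to agree up to reversal; but $(1,2,3,4)$ and $(2,1,4,3)$ differ regardless of the orientation chosen for $\ell_3$, yielding a crossing. If $\ell$ and $\ell_3$ meet at a point $O$, parameterize $\ell,\ell_3$ with origin at $O$, let $x_i,t_i$ denote the parameters of $p_i,q_i$, and use the criterion that $[p_i,q_i]$ and $[p_j,q_j]$ (with $x_i<x_j$) cross iff $\sign(x_it_i)=\sign(x_jt_j)=\sign(t_i-t_j)$. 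A short case analysis on the sign patterns compatible with $x_1<x_2<x_3<x_4$ and $t_2<t_1<t_4<t_3$ (and the exceptional subcase $O\in E_k$) shows that no sign assignment avoids crossings in all six pairs.

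The main obstacle is the deformation step: it requires establishing that the space of $1234$-transversals of $C_1,\ldots,C_4$ is path-connected, which amounts to a connectivity statement on the realizability set of Section~\ref{s:parameter}. If this cannot be established, an alternative is to invoke the combinatorial lifting of Section~\ref{s:combilift} directly, encoding the sign constraint $\ell_1\odot\ell_3=-\ell_2\odot\ell_3$ as additional polynomial inequalities and verifying algorithmically that the augmented semi-algebraic system is infeasible; the planar lemma then supplies the conceptual reason for this infeasibility.
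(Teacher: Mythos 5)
The weak link is exactly the step you flag: path-connectedness of the space of oriented $1234$-transversals to four disjoint convex sets. This is not a routine fact and nothing in Section~\ref{s:parameter} supplies it --- that parameterization describes triangles with vertices on three \emph{fixed} lines, not the space of all transversal lines to fixed convex bodies. In $\R^3$ the set of transversals realizing a single geometric permutation need not behave like a connected ``bundle'' as in the plane (transversals can be locally isolated or split into several components), and the theorem you are asked to prove is essentially a statement that some invariant is constant across \emph{all} $1234$-transversals; assuming connectivity begs most of that question. So the main route, as written, does not close. Your preliminary steps are sound and even have independent value: the planar forbidden pair $(1234,2143)$ does show that no $1234$-transversal can be coplanar with $\ell_3$, hence $\ell_1\odot\ell_3$ and $\ell_2\odot\ell_3$ are both nonzero --- a degenerate case the paper disposes of only implicitly via perturbation.

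Your fallback is in the spirit of the paper's actual proof, but it is only gestured at, and the phrase ``encoding the sign constraint as additional polynomial inequalities'' skips the one idea that makes the reduction work. The paper argues by contradiction assuming $\ell_1\odot\ell_3=-\ell_2\odot\ell_3$, perturbs to triangles on three skew lines, and applies the affine map sending the three lines to the canonical $\ell_x,\ell_y,\ell_z$ of Equation~\eqref{eq:canonical}. Since an affine map either preserves or reverses \emph{all} side operators simultaneously, and $\ell_x\odot\ell_z=\ell_y\odot\ell_z$, the contradiction hypothesis forces exactly one of $\ell_1,\ell_2$ to land on a reversed canonical line; no new inequalities are added to the lifting at all --- the side-operator constraint becomes a reversal of one permutation, and the algorithm is run on the four explicit triples $(abcd,dcba,badc)$, $(dcba,abcd,badc)$, $(abcd,dcba,cdab)$, $(dcba,abcd,cdab)$, none of which admits a canonical realization. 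Without this translation (or a proof of the connectivity claim), your argument has a genuine gap.
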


\noindent
The pattern $(1234,2143)$ is known to be forbidden in some cases (see
Section~\ref{s:discussion}), but this is the first condition valid for
arbitrary disjoint convex sets. We could prove Theorem~\ref{thm:so}
because our algorithm solves a more constrained problem than just
realizability of permutations. Given three lines in general position
in $\R^3$, there is a unique parallelotope with three disjoint edges
supported on these three lines (see
Figure~\ref{fig:boxedlines}). Combinatorial lifting, and therefore our
algorithm, can decide whether three permutations can be realized
\emph{with the vertices of that parallelotope in prescribed positions
  in the permutations}.

\begin{figure}[t]
  \begin{center}\includegraphics[page=1]{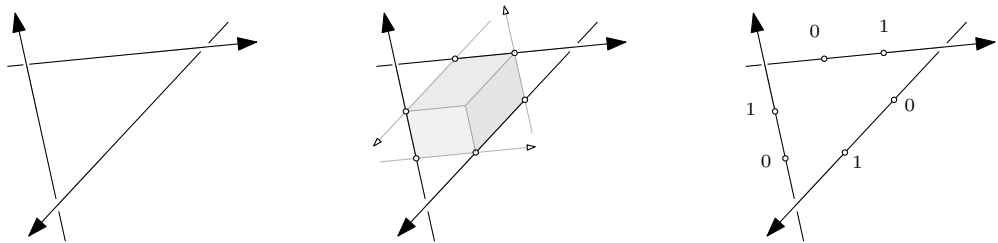}\end{center}
  \caption{Three skew lines (left), the parallelotope (middle) and the marked points (right).\label{fig:boxedlines}}
\end{figure}

We label the vertices of the parallelotope with $\0$ and $\1$ as in
Figure~\ref{fig:boxedlines} and work with permutations where two extra
elements, $\0$ and $\1$, are inserted; we call them \emph{tagged
  permutations}. We examine triples of tagged permutations realizable
on a canonical system of lines (see Equation~\eqref{eq:canonical}),
and characterize those minimally unrealizable up to size~$4$
(Proposition~\ref{p:tag}); for size~$2$ and~$3$, we
provide independent, direct, geometric proofs of unrealizability
(Section~\ref{s:analysis}). We conjecture that no other minimally
unrealizable triples of tagged permutations exist, and verified this
experimentally up to size~$6$ (not counting $\0$ and $\1$). A weaker conjecture is:

\begin{conjecture}\label{c:poly}
  There exists a polynomial time algorithm that decides the geometric
  realizability of a triple of permutations of size $n$ in $\R^3$.
\end{conjecture}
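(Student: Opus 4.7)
The plan is to derive Conjecture~\ref{c:poly} from the stronger conjecture stated just above it, namely that every minimally unrealizable triple of tagged permutations on the canonical system has size at most~$4$. Realizability is hereditary under deletion of elements (any subset of the convex sets in a realization realizes the induced sub-triple), so the stronger conjecture characterizes tagged realizability by the absence of finitely many forbidden sub-patterns, from which a polynomial-time algorithm is immediate.

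The reduction proceeds in two steps. First, untagged realizability reduces to tagged realizability: an untagged triple $(\pi_1,\pi_2,\pi_3)$ of size $n$ is realizable if and only if some placement of the tags $\0$ and $\1$ in each $\pi_i$ yields a tagged triple realizable on the canonical system of Equation~\eqref{eq:canonical}. Since each tag may occupy $O(n)$ positions, there are $O(n^6)$ tagged triples to inspect. Second, for each tagged triple, the stronger conjecture reduces the test to checking all $O(n^4)$ induced sub-triples of size at most~$4$ against a precomputed list of forbidden patterns, explicit via Proposition~\ref{p:tag} together with the smaller cases of Section~\ref{s:analysis}. The total running time is polynomial in~$n$.

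The main obstacle is proving the stronger conjecture itself, currently verified only up to size~$6$. A proof would likely take the form of a Helly- or Ramsey-type structural statement tailored to the semi-algebraic set $R$ produced by combinatorial lifting. A natural entry point is Proposition~\ref{p:cellR3n}: it reduces the question to showing that any order on $n$ lifted triples that is incompatible with the cyclic action of $f(t) = \tfrac{1}{1-t}$ already contains an incompatibility on at most four of the triples. Should this bound resist proof, one might still attempt a direct polynomial-time algorithm by exploiting the product-of-terms structure of the defining polynomials: each factor involves at most two of the variables $u_i$, suggesting dynamic programming on partial orders of subsets.
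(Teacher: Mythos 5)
The statement you are addressing is labeled a \emph{conjecture} in the paper: the authors do not prove it, and neither do you. Your argument is a conditional reduction, and you say so yourself. The reduction itself is essentially the reasoning the paper has in mind when it calls Conjecture~\ref{c:poly} ``weaker'' than the tagged-pattern conjecture: canonical realizability is hereditary under deleting symbols (restricting a canonical realization to a subset of the triangles leaves the lines, hence the tags, unchanged), so if every minimally forbidden tagged pattern had size at most~$4$, then a tagged pattern would be canonically realizable if and only if none of its $O(n^4)$ sub-patterns of size at most~$4$ appears in the (finite, computer-verified) list of Proposition~\ref{p:tag}; combined with the polynomially many tag placements this gives a polynomial-time test. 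Two small repairs are needed even for the conditional statement: by Lemma~\ref{l:canonical} the passage to the canonical system may reverse some of the permutations, so the enumeration must also range over the $8$ reversal choices (a constant factor, as in the paper's count of $8\binom{n+2}{2}^3$ instances), and the correctness of the forbidden-pattern test requires the \emph{complete} list of forbidden tagged patterns up to size~$4$, not only the minimal ones of Proposition~\ref{p:tag} --- though this list is recoverable from the minimal ones by heredity.

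The genuine gap is that the load-bearing ingredient --- that there is no minimally forbidden tagged pattern of size $5$ or more (or indeed any fixed size bound) --- is exactly the open problem. The paper's evidence for it is purely experimental (exhaustive verification up to size~$6$), and your proposal offers only a heuristic direction (``a Helly- or Ramsey-type statement'' via Proposition~\ref{p:cellR3n}) rather than an argument. In particular, the suggestion that an incompatibility with the action of $f$ must already appear among at most four of the lifted triples is not something Proposition~\ref{p:cellR3n} gives you: conditions (i) and (ii) there characterize which \emph{orders} are realized by the lift, but unrealizability of a tagged pattern arises from the interaction of these order constraints with the Guigue--Devillers disjointness formula $\psi$, and nothing in the paper bounds the size of a minimal witness of that interaction. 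So the proposal should be read as a correct articulation of why the stronger conjecture implies Conjecture~\ref{c:poly}, not as a proof of the conjecture itself; the alternative closing suggestion (dynamic programming over partial orders, on the grounds that each factor involves at most two variables) is likewise only a hope, since the partial orders involved couple all $3n$ blocks through the global conditions of Proposition~\ref{p:cellR3n}.
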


\subsection{Discussion and related work}
\label{s:discussion}

We now put our contribution in context, starting with motivations for
studying geometric permutations.

\subparagraph{Geometric transversals.}

In the 1950's, Gr\"unbaum~\cite{g-ct-58} conjectured that, given a
family of disjoint translates of a convex figure in the plane, if
every five members of the family can be met by a line, then there
exists a line that meets the entire family. (Such a statement, if
true, is an example of \emph{Helly-type theorem}.) Progress on
Gr{\"u}nbaum's conjecture was slow until the 1980's, when the notion
of geometric permutations of families of convex sets was
introduced~\cite{KLZ85,katchalski1986geometric}. Their systematic study
in the plane was refined by Katchalski~\cite{katchalski-grunbaum} in
order to prove a weak version of Gr\"unbaum's conjecture (with $128$
in place of $5$). Tverberg~\cite{t-pgcct-89} soon followed up with a
proof of the conjecture, again using a careful analysis of planar
geometric permutations. This initial success and further conjectures
about Helly-type theorems stimulated a more systematic study of
geometric permutations realizable under various geometric
restrictions; \emph{cf.}~\cite{holmsenwenger} and the references
therein.

Another motivation to study geometric permutations comes from
computational geometry, more precisely the study of geometric
structures such as \emph{arrangements}. There, geometric permutations
appear as a coarse measure of complexity of the space of line
transversals to families of sets, and relates to various algorithmic
problems such as ray-shooting or smallest enclosing cylinder
computation~\cite[$\mathsection 7.6$]{pach2009combinatorial}. From
this point of view, the main question is to estimate the maximum
number $g_d(n)$ of distinct geometric permutations of $n$ pairwise
disjoint convex sets in~$\R^d$. Broadly speaking, while $g_2(n)$ is
known to equal $2n-2$~\cite{es-mnwsn-90}, even the order of magnitude
of $g_d(n)$ as $n \to \infty$ is open for every $d \ge 3$; the gap is
between $\Omega(n^{d-1})$ and $O(n^{2d-3}\log n)$. Bridging this gap
has been identified as an important problem in discrete
geometry~\cite[$\mathsection 7.6$]{pach2009combinatorial}, yet, over
the last fifteen years, the only progress has been an improvement of
the upper bound from $O(n^{2d-2})$ down to $O(n^{2d-3}\log n)$;
moreover, while the former bound follows from a fairly direct
argument, the latter is a technical tour de force
\cite{rubin2012improved}. We hope that a better understanding of small
forbidden configurations will suggest new approaches to this question.

\subparagraph{Geometric realizability problems.}

Combinatorial structures that arise from geometric configurations such
as arrangements, polytopes, or intersection graphs are classical
objects of enquiry in discrete and computational geometry (see
\emph{e.g.}~\cite[$\mathsection$~1, 5, 6, 10, 15, 17,
  28]{handbook}). We are concerned here with the \emph{membership
  testing problem}: given an instance of a combinatorial structure,
decide if there exists a geometric configuration that induces it. Such
problems can be difficult: for instance, deciding whether a given
graph can be obtained as intersection graph of segments in the plane
is NP-hard~\cite{kratochvil1994intersection}.

A natural approach to membership testing is to parameterize the
candidate geometric configuration and express the combinatorial
structure as conditions on these parameters. This often results in a
semi-algebraic set. In the real-RAM model\footnote{See the same
  reference for a similar bound in the bit model.}, the emptiness of a
semi-algebraic set in $\R^d$ with real coefficients can be tested in
time $(nD)^{O(d)}$~\cite[Prop. 4.1]{renegar1992computational}, where
$n$ is the number of polynomials and $D$ their maximum degree. (Other
approaches exist but have worse complexity bounds,
see~\cite{davenport1988real,lombardi2014elementary,kaltofen2012exact,cucker2017computing}). Given
three permutations of size~$n$, we describe their realizations as a
semi-algebraic set defined by $O(n^2)$ cubic polynomials in $n$
variables; the above method thus has complexity $n^{O(n)}$, making our
$O\pth{6^nn^{10}}$ solution competitive in theory. Practical
effectiveness is usually difficult to predict as it depends on the
geometry of the underlying algebraic surfaces; for example, deciding
if two geometric permutations of \emph{size four} are realizable by
disjoint unit balls in $\R^3$ was recently checked to be out of
reach~\cite{ha2016geometric}.

Some geometric realizability problems, for example the recognition of
unit disk graphs, are $\exists\R$-hard~\cite{schaefer2017fixed} and
therefore as difficult from a complexity point of view as deciding the
emptiness of a general semi-algebraic set. We do not know whether
deciding the emptiness of semi-algebraic sets amenable to our
combinatorial lifting remains $\exists\R$-hard; we believe, however,
that deciding if a triple of permutations is realizable is not
(\emph{c.f.} Conjecture~\ref{c:poly}).

\subparagraph{Forbidden patterns.}

A dimension count shows that any $k$ permutations are realizable
in~$\R^{2k-1}$. Our most direct predecessor is the work of Asinowski
and Katchalski~\cite{Asinowski2005} who proved that this argument is
sharp by constructing, for every $k$, a set of $k$ permutations that
are not realizable in $\R^{2k-2}$. They also showed that the triple
$(123456, 321654, 246135)$ is not realizable in $\R^3$, a fact that
follows easily from our list of obstructions.

In a sense, our work tries to generalize some arguments previously
used to analyze geometric permutations in the plane. For example, the
(standard) proof that the pair $(1234, 2143)$ is non-realizable as
geometric permutations in $\mathbb{R}^2$ essentially analyzes tagged
permutations. Indeed, if we augment the permutations by an additional
label $\0$ marking the intersection of the lines realizing the two
orders, we get that $(\0ab, \0ba)$, $(\0ab, ab\0)$, $(ba\0, \0ba)$ and $(ba\0, ab\0)$ are forbidden, and there is nowhere
to place $\0$ in $(1234, 2143)$. (See also
Section~\ref{s:interpretation} and Observation \ref{planar-obs}.) We
should, however, emphasize that already in $\R^3$ the geometry is much
more subtle.

Forbidden patterns were used to bound the number of geometric
permutations for certain restricted families of convex sets. For
pairwise disjoint translates of a convex planar
figure~\cite{katchalski-grunbaum, t-pgcct-89, tverberg91}, it is known
that a given family can have at most three geometric permutations, and
the possible sets of realizable geometric permutations have been
characterized. The situation is similar for families of pairwise
disjoint unit balls in $\R^d$. Here, an analysis of forbidden patterns
in geometric permutations showed that a given family can have at most
a constant number of geometric permutations (in fact only two if the
family is sufficiently large) \cite{sms00, CGN05,
  ha2016geometric}. Another example is \cite{shakhar-origins}, where
it is shown that the maximum number of geometric permutations for
convex objects in $\mathbb{R}^d$ induced by lines that pass through
the origin, is in $\Theta(n^{d-1)}$. The restriction that the lines
pass through the origin, allows them to deal with permutations
augmented by one additional label, and their argument relies on the
forbidden tagged pattern $(\0ab,\0ba)$ \cite[Lemma
  2.1]{shakhar-origins}.

In these examples, the bounds use highly structured sets of forbidden
patterns. In general, one cannot expect polynomial bounds on the sole
basis of excluding a handful of patterns; for instance it is not hard
to construct an exponential size family of permutations of $[n]$ which
avoids the pattern $(1234, 2143)$. Such questions are well-studied in
the area of ``pattern-avoidance'' and usually the best one could hope
for is an exponential upper bound on the size of the family
\cite{MarcusTardos}.

\section{Semi-algebraic parameterization}
\label{s:parameter}

Let $P = (\pi_1, \pi_2,\pi_3)$ denote a triple of permutations of
$\{1,2,\ldots, n\}$. We now describe a semi-algebraic set that is nonempty
if and only if $P$ has a geometric realization in~$\R^3$.

\subparagraph{Canonical realizations.}

We say that a geometric realization of $P$ is \emph{canonical} if the oriented line transversals are
\begin{equation}\label{eq:canonical}
  \ell_x = \pth{\begin{matrix}0\\1\\0\end{matrix}} + \R \pth{\begin{matrix}1\\0\\0\end{matrix}}, \quad
\ell_y = \pth{\begin{matrix}0\\0\\1\end{matrix}} + \R \pth{\begin{matrix}0\\1\\0\end{matrix}}, \hbox{ and }\quad
\ell_z = \pth{\begin{matrix}1\\0\\0\end{matrix}} + \R \pth{\begin{matrix}0\\0\\1\end{matrix}},
\end{equation}
and if the convex sets are triangles with vertices on $\ell_x$, $\ell_y$ and $\ell_z$. 

\begin{lemma}\label{l:canonical}
If $P$ is geometrically realizable in $\R^3$, then it has a canonical realization, possibly after reversing some of the permutations $\pi_i$.
\end{lemma}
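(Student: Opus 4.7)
The plan is to turn an arbitrary realization of $P=(\pi_1,\pi_2,\pi_3)$ into a canonical one in three steps: (i) reduce to the case where the three transversal lines $L_1,L_2,L_3$ are pairwise skew and have linearly independent direction vectors; (ii) apply an affine map sending the ordered triple $(L_1,L_2,L_3)$ to $(\ell_x,\ell_y,\ell_z)$; (iii) replace each convex set by the triangle spanned by its three intersection points with the (transformed) transversals.

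I would begin with step (iii), which is the cleanest. By convexity, the triangle $T_i$ with vertices $L_1\cap C_i$, $L_2\cap C_i$, $L_3\cap C_i$ lies entirely inside $C_i$, so the family $\{T_i\}$ inherits pairwise disjointness from $\{C_i\}$. Since $T_i$ contains each of these three intersection points, the three lines induce the same three permutations on the $T_i$ as on the $C_i$, and by construction the vertices of $T_i$ lie on the canonical transversals.

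For step (ii), the key observation is that the $12$-dimensional affine group of $\R^3$ acts transitively, with discrete stabilizer, on the open set of ordered triples of pairwise skew lines with linearly independent directions (itself a $12$-dimensional manifold). A concrete way to produce the map is to use the parallelotope of Figure~\ref{fig:boxedlines}: three pairwise skew lines with independent directions determine a unique parallelotope with three disjoint edges supported on them, and the unique affine map sending this parallelotope to the unit parallelotope associated with $(\ell_x,\ell_y,\ell_z)$ sends $(L_1,L_2,L_3)$ to $(\ell_x,\ell_y,\ell_z)$ as unoriented lines. The orientation induced on each $\ell_i$ by this map may, however, disagree with the prescribed canonical orientation of $\ell_{x/y/z}$, and in that case the corresponding permutation $\pi_i$ must be reversed, which is precisely what the statement allows.

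The main obstacle is step (i): handling degenerate transversal configurations, such as two coplanar or parallel transversals, or three directions failing to span $\R^3$. I would dispose of this by perturbation. Replace each $C_i$ by its $\varepsilon$-neighbourhood for an $\varepsilon$ smaller than half the minimum pairwise distance between the $C_i$, so that the fattened sets remain pairwise disjoint and each transversal now passes through their interiors. The triples of pairwise skew lines with linearly independent direction vectors form an open dense subset of the space of line triples in $\R^3$, so an arbitrarily small perturbation of $(L_1,L_2,L_3)$ lies inside it; by continuity, the perturbed lines still meet each fattened set, and in the same order as before, which reduces the problem to the generic situation handled in steps (ii) and (iii).
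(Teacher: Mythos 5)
Your proposal follows essentially the same route as the paper's proof: crop each convex set to the triangle spanned by its intersection points with the three transversals, perturb to a generic configuration (pairwise skew lines whose directions span $\R^3$), apply the unique affine map taking the three lines to $\ell_x,\ell_y,\ell_z$, and absorb the orientation mismatch by reversing some of the $\pi_i$, exactly as the statement allows. The parallelotope description of the affine map is a pleasant concrete variant of the paper's parameter-counting argument, but it is not a different method.

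There is, however, one concrete flaw: the order of your steps. In step (i) you fatten the \emph{original} sets $C_i$ by an $\varepsilon$ smaller than half their minimum pairwise distance, but the $C_i$ are arbitrary disjoint convex sets, not assumed compact or closed, so this minimum can be $0$ (two disjoint closed convex sets can be asymptotic to each other, and two disjoint open sets can have touching closures); then no admissible $\varepsilon$ exists, and the subsequent continuity argument for perturbing the lines also loses the compactness it implicitly relies on. The paper avoids this by performing your step (iii) \emph{first}: each $C_i$ is replaced by the (possibly degenerate) triangle spanned by one chosen point from each intersection $C_i\cap L_j$, which produces compact, pairwise disjoint sets with positive pairwise distances; only then does it take the Minkowski sum with a small ball, perturb the lines into general position, and re-crop to triangles with vertices on the perturbed lines before applying the affine map. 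Your argument is repaired simply by reordering your own steps in this way; nothing else needs to change.
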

\begin{proof}
  Consider a realization of $P$ by three lines and $n$ pairwise
  disjoint sets. For each convex set we select a point from the
  intersection with each of the lines and replace it by the (possibly
  degenerate) triangle spanned by these points. This realizes $P$ by
  {\em compact} convex sets. By taking the Minkowski sum of each set
  with a sufficiently small ball, the sets remain disjoint and the
  line intersect the sets in their interior. We may now perturb the
  lines into three lines $L_x$, $L_y$ and $L_z$ that are pairwise skew
  and not all parallel to a common plane. We then, again, crop each
  set to a triangle with vertices on $L_x$, $L_y$ and $L_z$.

  We now use an affine map to send our three lines to $\ell_x$,
  $\ell_y$ and $\ell_z$. An affine transform is defined by $12$
  parameters and fixing the image of one line amounts to four linear
  conditions on these parameters; these constraints determine a unique
  transform because the lines are in general position. Note, however,
  that the oriented line $L_x$ is mapped to either $\ell_x$ or
  $-\ell_x$, so $\pi_1$ may have to be reversed; the same applies to
  the permutations $\pi_2$ and $\pi_3$.
\end{proof}

\noindent
We equip the line $\ell_x$ (resp. $\ell_y$, $\ell_z$) with the
coordinate system obtained by projecting the $x$-coordinate
(resp. $y$-coordinate, $z$-coordinate) of $\R^3$. This parameterizes the space of canonical realizations by $\R^{3n}$. Specifically, we equip $\R^{3n}$ with a coordinate system $(O,x_1,x_2, \ldots, x_n, y_1, y_2, \ldots,$ $y_n, z_1, z_2, \ldots, z_n)$ and for any point $c \in \R^{3n}$ we put
\[ \Tr(c) = \{\mbox{conv}\{X_i,Y_i,Z_i\}\}_{1 \le i \le n}, \quad \hbox{where} \quad
X_i=\left(\begin{matrix}x_i\\1\\0\end{matrix}\right), \quad Y_i =
  \left(\begin{matrix}0\\y_i\\1\end{matrix}\right), \hbox{ and } \quad
    Z_i =\left(\begin{matrix}1\\0\\z_i\end{matrix}\right).
\]
Each element of $\Tr(c)$ is thus a triangle with a vertex on each of
$\ell_x$, $\ell_y$ and $\ell_z$. We define:
\[R = \{c \in \R^{3n} \colon \Tr(c) \hbox{ consists of disjoint
  triangles and realizes } P\}.\]
The triple $P$ is realizable if and only if $R$ is non-empty.

\subparagraph{Triangle disjointedness.}

We now review an algorithm of Guigue and
Devillers~\cite{guigue2003fast} to decide if two triangles are
disjoint, and use it to formulate the condition that two triangles
$X_iY_iZ_i$ and $X_jY_jZ_j$ be disjoint as a semi-algebraic condition
on $x_i, \ldots, z_j$.

\medskip

The algorithm and our description are expressed in terms of
orientations, where the \emph{orientation} of four points $p,q,r,s \in
\R^3$ is
\[ [p,q,r,s] \eqdef \sign \det \pth{
\begin{array}{cccc}
 x_p & x_q & x_r & x_s\\
 y_p & y_q & y_r & y_s\\
 z_p & z_q & z_r & z_s\\ 
 1 & 1& 1& 1 
\end{array}}.\]
Intuitively, the orientation indicates whether point $s$ is ``above''
(+1), on (0), or ``below'' (-1) the plane spanned by $p,q,r$, where
above and below refer to the orientation of the plane that makes the
directed triangle $pqr$ positively oriented. We only consider
orientations of non-coplanar quadruples of points, so orientations
take values in $\{\pm1\}$.

\medskip

If one triangle is on one side of the plane spanned by the other, then
the triangles are disjoint. We check this by computing
\[ v(i,j) \eqdef  \left(
\begin{matrix}
  [X_i,Y_i,Z_i,X_j]\\
  [X_i,Y_i,Z_i,Y_j]\\
  [X_i,Y_i,Z_i,Z_j]\\
  [X_j,Y_j,Z_j,X_i]\\
  [X_j,Y_j,Z_j,Y_i]\\
  [X_j,Y_j,Z_j,Z_i]
\end{matrix}
\right) \in \{-1,1\}^6\]
and testing if $v(i,j)_1=v(i,j)_2=v(i,j)_3$ or
$v(i,j)_4=v(i,j)_5=v(i,j)_6$. If this fails, then we rename
$\{X_i,Y_i,Z_i\}$ into $\{A_i,B_i,C_i\}$ and $\{X_j,Y_j,Z_j\}$ into
$\{A_j,B_j,C_j\}$ so~that
\[ \pth{
  \begin{matrix}
    [A_i,B_i,C_i,A_j]\\
    [A_i,B_i,C_i,B_j]\\
    [A_i,B_i,C_i,C_j]\\
    [A_j,B_j,C_j,A_i]\\
    [A_j,B_j,C_j,B_i]\\
    [A_j,B_j,C_j,C_i]
\end{matrix}  } = \pth{\begin{matrix} 1 \\ -1 \\ -1\\ 1 \\ -1 \\ -1\end{matrix} }.\]
Then, the triangles are disjoint if and only if $[A_i,B_i,A_j,B_j] =
1$ or $[A_i,C_i,C_j,A_j] = 1$~\cite{guigue2003fast}. The renaming is
done as follows.  Since the first test is inconclusive, the plane
spanned by a triple of points separates the other triple of points. We
let $(A_i,B_i,C_i)$ be the circular permutation of $(X_i,Y_i,Z_i)$
such that $A_i$ is separated from $B_i$ and $C_i$ by the plane spanned
by $X_j$, $Y_j$, and $Z_j$. We let $(A_j,B_j,C_j)$ be the circular
permutation of $(X_j,Y_j,Z_j)$ such that $A_j$ is separated from $B_j$
and $C_j$ by the plane spanned by $A_i$, $B_i$, and $C_i$. If
$[A_i,B_i,C_i,A_j] = -1$ then we exchange $B_i$ and $C_i$. If
$[A_j,B_j,C_j,A_i] = -1$ then we exchange $B_j$ and $C_j$.

\subparagraph{Semi-algebraicity.}

Every step in the Guigue-Devillers algorithm can be expressed as a
logical proposition in terms of orientation predicates which are, when
specialized to our parameterization, conditions on the sign of
polynomials in the coordinates of $c$. Checking that each of $\ell_x$,
$\ell_y$ and $\ell_z$ intersects the triangles in the prescribed order
amounts to comparing coordinates of $c$. Altogether, the set $R$ is a
semi-algebraic subset of $\R^{3n}$.

\section{Combinatorial lifting}\label{s:combilift}

We now explain how to test combinatorially the emptiness of our
semi-algebraic set $R$.

\subparagraph{Definitions.}

We start by decomposing each orientation predicate used in the
definition of $R$ as indicated in Table~\ref{t:decomp}. For the last three rows, this is not a
factorization since one of the factors is 
of the form $u - f(v)$ where $f:t \mapsto
\frac1{1-t}$.

\begin{table}[t]
  \begin{center}
  \begin{tabular}[padding=25px]{|c|c|c|}
    \hline
    Orientation & Determinant & Decomposition \\
    \hline
    \hline
    $[X_a,X_b,Y_c,Y_d]$ & $(x_a-x_b)(y_c-y_d)$ & $(x_a-x_b)(y_c-y_d)$\\
    $[X_a,X_b,Z_c,Z_d]$ & $(x_a-x_b)(z_c-z_d)$ & $(x_a-x_b)(z_c-z_d)$  \\
    $[Y_a,Y_b,Z_c,Z_d]$ & $(y_a-y_b)(z_c-z_d)$ & $(y_a-y_b)(z_c-z_d)$ \\
    \hline
    $[X_a,X_b,Y_c,Z_d]$ & $(x_a-x_b)(y_cz_d-z_d+1)$ & $(x_a-x_b)(y_c-1)\pth{z_d-\frac1{1-y_c}}$\\
    $[X_a,Y_b,Y_c,Z_d]$ & $(y_b-y_c)(x_a-x_az_d-1)$ & $- (y_b-y_c)(z_d-1)\pth{x_a-\frac1{1-z_d}}$\\
    $[X_a,Y_b,Z_c,Z_d]$ & $(z_c-z_d)(x_ay_b+1-y_b) $& $(z_c-z_d)(x_a-1)\pth{y_b-\frac1{1-x_a}}$\\
    \hline
  \end{tabular}
  \end{center}
  \caption{Orientation predicates used in the Guigue-Devillers
    algorithm when specialized to points from $\ell_x$, $\ell_y$ and
    $\ell_z$.\label{t:decomp}}
\end{table}

In light of the third column of Table~\ref{t:decomp}, it may seem
natural to ``linearize'' the problem by considering the map 
$(x_1,x_2, \ldots, z_n) \mapsto (x_1,f(x_1), x_2, f(x_2), \ldots, z_n, f(z_n))$
 from $\R^{3n}$ to $\R^{6n}$. Indeed, the order on the lifted coordinates and~$1$ determines the
sign of all polynomials defining~$R$. We must, however, identify the
orders on the coordinates in $\R^{6n}$ that can be realized by lifts
of points from $\R^{3n}$. Perhaps surprisingly, the task gets easier
if we lift to even higher dimension. For convenience we let $\R_*
\eqdef \R\setminus \{0,1\}$. The lifting map we use is:
\[ \Lambda: \left\{\begin{array}{rcl} \R_*^{3n} & \to & \R^{9n} \\
(x_1,x_2, \ldots, z_n) & \mapsto & \left( x_1,f(x_1),f^{(2)}(x_1), x_2,
\ldots, z_n, f(z_n),f^{(2)}(z_n)\right)\end{array}\right.\]
To determine the image of $\Lambda\pth{\R_*^{3n}}$, we will use the
following properties of $f$:

\begin{claim}\label{c:f}
  $f^{(3)} = f \circ f \circ f$ is the identity on $\R_*$, $f$
  permutes the intervals $(-\infty,0)$, $(0,1)$ and $(1,+\infty)$
  circularly, and $f$ is monotone on each of these intervals.
\end{claim}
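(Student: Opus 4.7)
The plan is to verify the three assertions of the claim by direct computation, treating them in the order stated.

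First, I would handle $f^{(3)} = \mathrm{id}$ by straightforward symbolic manipulation. Starting from $f(t) = \frac{1}{1-t}$, I would compute
\[ f^{(2)}(t) = \frac{1}{1 - \frac{1}{1-t}} = \frac{1-t}{-t} = \frac{t-1}{t} = 1 - \frac{1}{t}, \]
which is well-defined whenever $t \in \R_*$, and then apply $f$ once more to obtain
\[ f^{(3)}(t) = \frac{1}{1 - \left(1 - \frac{1}{t}\right)} = t. \]
The only care needed is to check that the intermediate values stay in the domain of $f$, which follows from the second item of the claim (so in practice I would prove the second item in parallel, or simply verify that $t \in \R_*$ implies $f(t) \in \R_*$, which is immediate since $f$ never takes the values $0$ or $1$).

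Second, for the circular action on the three intervals, I would argue interval by interval using the sign and magnitude of $1-t$. If $t \in (-\infty,0)$ then $1-t \in (1,+\infty)$, so $f(t) = 1/(1-t) \in (0,1)$; if $t \in (0,1)$ then $1-t \in (0,1)$, so $f(t) \in (1,+\infty)$; and if $t \in (1,+\infty)$ then $1-t \in (-\infty,0)$, so $f(t) \in (-\infty,0)$. This establishes the cyclic pattern $(-\infty,0) \mapsto (0,1) \mapsto (1,+\infty) \mapsto (-\infty,0)$. As a sanity check, composing three such shifts returns each interval to itself, consistent with the first item.

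Third, monotonicity on each of the three intervals is immediate from the derivative
\[ f'(t) = \frac{1}{(1-t)^2}, \]
which is strictly positive at every $t$ where $f$ is defined, so $f$ is strictly increasing on each connected component of its domain.

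There is no real obstacle here: all three items reduce to one-line calculations, and the only thing to watch is that the intermediate expressions in the computation of $f^{(2)}$ and $f^{(3)}$ remain defined, which is exactly what the restriction to $\R_*$ and the circular permutation of intervals guarantee.
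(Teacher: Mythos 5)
Your verification is correct: the computations $f^{(2)}(t) = 1 - \tfrac1t$, $f^{(3)}(t) = t$, the interval-by-interval case check, and the positivity of $f'(t) = (1-t)^{-2}$ are all accurate, and you rightly note that $f$ maps $\R_*$ into $\R_*$ so the iterates are defined. The paper states Claim~\ref{c:f} without proof, treating it as an elementary fact, and your argument is exactly the routine verification that is intended.
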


\noindent
Let us denote the points of $\mathbb{R}^{9n}$ by vectors $(\ttt_1,
\ttt_2, \ldots, \ttt_{9n})$. We next ``lift'' the semi-algebraic
description of $R$:

\begin{enumerate}
\item We pick a Boolean formula $\phi$ describing $R$ in terms of
  orientations (for the triangle disjointedness) and comparisons of
  coordinates (for the geometric permutations).

\item We decompose every orientation predicate ocurring in $\phi$ as
  in the third row of Table~\ref{t:decomp}.

\item We then construct another Boolean formula $\psi$ by substituting\footnote{For example, with $n=3$, the product 
$ (x_1 - x_2) (y_2 - 1) \left(z_3 - \tfrac{1}{1-y_2}\right) = (x_1-x_2) (y_2 - 1) (z_3 - f(y_2))$
appearing in $\phi$ is translated in $\psi$ as $(\ttt_1 - \ttt_4)
(\ttt_{13}-1) (\ttt_{25} - \ttt_{14})$.}
  in $\phi$ every $f(x_1)$ by the variable $\ttt_2$ (to which it is
  mapped under $\Lambda$). We similarly substitute every $f(x_i)$,
  $f(y_i)$ and $f(z_i)$, then every remaining $x_i$, $y_i$ and $z_i$
  by the corresponding variable $\ttt_*$.

\item We let $S \subset \R^{9n}$ be the (semi-algebraic) set of points
  that satisfy $\psi$.
\end{enumerate}

\noindent
We finally let $\h$ denote the arrangement in $\R^{9n}$ of the
set of hyperplanes:
\[\left\{ \ttt_i = \ttt_j \right\}_{1 \le i<j\le 9n} \cup \{\ttt_i = 0\}_{1 \le i
  \le 9n} \cup \{\ttt_i = 1\}_{1 \le i \le 9n}.\]
Note that the full-dimensional (open) cells in $\h$ are in bijection
with the total orders on $\{0,1,\ttt_1, \ldots, \ttt_{9n}\}$ in which
$0$ comes before $1$. We write $\prec_A$ for the order associated with
a full-dimensional cell $A$ of $\h$.
  
\begin{lemma}\label{l:lift1}
  Every full-dimensional cell of $\h$ is disjoint from or contained in
  $S$. Moreover, $R$ is nonempty if and only if there exists a
  full-dimensional cell of $\h$ that is contained in $S$ and
  intersects $\Lambda\pth{\R_*^{3n}}$.
\end{lemma}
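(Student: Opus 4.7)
The plan is to dispatch the first (structural) claim by inspecting which polynomials appear in $\psi$, then establish a tautological correspondence $\phi(c) \iff \psi(\Lambda(c))$ for $c \in \R_*^{3n}$, and finally derive both directions of the equivalence, the forward direction requiring a small perturbation argument. For the first claim, I would observe that by construction $\psi$ is a Boolean combination of sign conditions on polynomials of the form $\ttt_i - \ttt_j$ and $\ttt_i - 1$: after the decompositions of Table~\ref{t:decomp} and the substitutions of step~3, every orientation predicate becomes a product of three such factors, and the coordinate comparisons encoding the permutation orders translate into inequalities $\ttt_i < \ttt_j$. Since the zero-sets of all these polynomials are hyperplanes of $\h$, the sign of each is constant on every full-dimensional cell $A$ of $\h$, so $\psi$ is constant on $A$, proving the first claim.

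For the equivalence, the key observation is that $\psi(\Lambda(c)) = \phi(c)$ for every $c \in \R_*^{3n}$: each factor arising in $\psi$ evaluates at $\Lambda(c)$ to the same real number as the corresponding factor of $\phi$ does at $c$, directly from the definition of $\Lambda$. The backward direction is then immediate: if $A \subseteq S$ is a full-dimensional cell containing some $\Lambda(c)$ with $c \in \R_*^{3n}$, then $\Lambda(c) \in S$ gives $\psi(\Lambda(c)) = \phi(c) = \mathbf{true}$, hence $c \in R$ and $R$ is nonempty.

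The forward direction is the main obstacle. Starting from $c \in R$, I need to perturb it so that $\Lambda(c)$ lies in the interior of a cell of $\h$. Since $R$ is open (being defined by strict inequalities) and $\R^{3n} \setminus \R_*^{3n}$ is a finite union of hyperplanes, $R \cap \R_*^{3n}$ is open and nonempty. For each hyperplane $H$ of $\h$, the preimage $\Lambda^{-1}(H) \cap \R_*^{3n}$ is an algebraic subset of $\R_*^{3n}$; using Claim~\ref{c:f}, which guarantees that the $f$-orbit of any element of $\R_*$ consists of three distinct points all lying in $\R_*$, one checks that hyperplanes of the form $\ttt_a = 0$, $\ttt_a = 1$, or $\ttt_a = \ttt_b$ with $a,b$ indices from the same orbit have empty preimage in $\R_*^{3n}$ (because $f$ and $f^{(2)}$ have no real fixed points), while the remaining $\ttt_a = \ttt_b$ pull back to proper subvarieties defined by a single polynomial relation between two of the $c_i$'s. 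The union of these finitely many subvarieties has measure zero, so a generic $c' \in R \cap \R_*^{3n}$ has $\Lambda(c')$ in a full-dimensional cell $A$; then $A \cap S \ne \emptyset$ via $\Lambda(c')$, hence $A \subseteq S$ by the first claim, and $A \cap \Lambda(\R_*^{3n}) \ne \emptyset$ by construction.
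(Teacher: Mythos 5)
Your proof is correct and follows essentially the same route as the paper's: sign-constancy on full-dimensional cells of the polynomials defining $S$ for the first claim, the tautological identity $\phi(c)\iff\psi(\Lambda(c))$ (i.e.\ $\Lambda(R)=S\cap\Lambda(\R_*^{3n})$), and a perturbation argument for the forward direction. Your explicit measure-zero analysis of which hyperplanes of $\h$ have nonempty preimage under $\Lambda$ simply spells out the perturbation step that the paper invokes more briefly.
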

\begin{proof}
  The set $S$ is defined by the positivity or negativity of
  polynomials, each of which is a product of terms of the form
  $(\ttt_i - \ttt_j)$ or $(\ttt_i - 1)$. The first statement thus
  follows from the fact the coordinates of all points in a
  full-dimensional cell realize the same order on $\{0,1,\ttt_1,
  \ldots, \ttt_{9n}\}$.  By the perturbation argument used in the
  proof of Lemma~\ref{l:canonical}, if $R$ is non-empty, then it
  contains a point with no coordinate in $\{0,1\}$. Thus, $R$ is non-empty if and only if $\Lambda(R)$ is
  non-empty. The construction of $S$ ensures that $\Lambda(R) = S \cap
  \Lambda\pth{\R_*^{3n}}$. Again, a perturbation argument ensures that
  if $\Lambda(R)$ is nonempty, it contains a point outside of the
  union of the hyperplanes of $\h$. The second statement follows.
\end{proof}

\subparagraph{Zone characterization.}

Inspired by Lemma~\ref{l:lift1}, we now characterize the orders
$\prec_A$ such that $A$ intersects $\Lambda\pth{\R_*^{3n}}$. We
split the $9n$ variables $\ttt_1, \ttt_2, \ldots, \ttt_{9n}$ into $3n$
blocks of three consecutive variables $\ttt_{3i+1}, \ttt_{3i+2}$,
$\ttt_{3i+3}$ (representing $x_i, f(x_i), f^{(2)}(x_i)$ for $0\leq
i<n$, $y_i, f(y_i), f^{(2)}(y_i)$ for $n\leq i <2n$, and $z_i, f(z_i),
f^{(2)}(z_i)$ for $2n\leq i <3n$). We also define an operator $\ff$
that shifts the variables cyclically within each individual block:
\[ \ff(\ttt_{3i+1}) =  \ttt_{3i+2}, \quad  \ff(\ttt_{3i+2}) =  \ttt_{3i+3} \quad \hbox{and} \quad   \ff(\ttt_{3i+3})  = \ttt_{3i+1}. \]

\noindent
By convention, $\ff^0$ means the identity. The fact that $\ff$
mimicks, symbolically, the action of~$f$ yields the following
characterization.

\begin{proposition}\label{p:cellR3n}
  A full-dimensional cell $A$ of $\h$ intersects
  $\Lambda\pth{\R_*^{3n}}$ if and only if

\begin{enumerate}[(i)]
\item For any $0\leq i <3n$, there exists $j \in \{0,1,2\}$ s. t. \[\ff^{(j)}(\ttt_{3i+1}) \prec_A 0 \prec_A 
  \ff^{(j+1)}(\ttt_{3i+1}) \prec_A 1 \prec_A \ff^{(j+2)}(\ttt_{3i+1}).\]
\item For any $1\leq i,j \le 9n$, \qquad $\left\{\ttt_i \prec_A \ttt_j \quad \text{and} \quad \ff(\ttt_j) \prec_A
    \ff(\ttt_i)\right\} \quad \Rightarrow \quad \ttt_i \prec_A 1 \prec_A \ttt_j.$
\end{enumerate}
\end{proposition}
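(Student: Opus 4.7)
My plan is to prove both directions of the proposition separately. The necessary direction will be a fairly direct consequence of Claim~\ref{c:f}: given a point $p = \Lambda(u) \in A$, the three coordinates of $p$ in block $i$ equal $(u_i, f(u_i), f^{(2)}(u_i))$; since $u_i \in \R_*$, Claim~\ref{c:f} says this triple has exactly one element in each of $(-\infty,0)$, $(0,1)$, $(1,+\infty)$, in a cyclic order mirroring $\ff$, which gives (i). For (ii), I would observe that if $\ttt_i \prec_A \ttt_j$ while $\ff(\ttt_j) \prec_A \ff(\ttt_i)$, then the underlying values satisfy $v_i < v_j$ and $f(v_j) < f(v_i)$; since $f$ is monotone on each of the three intervals, $v_i$ and $v_j$ must lie in different intervals, and a short case analysis of the three pairs of intervals shows that $f$ reverses the order only when $v_i < 1 < v_j$.

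For sufficiency, I would construct a concrete $u \in \R_*^{3n}$ with $\Lambda(u) \in A$. Using (i), for each block $i$ let $j_i \in \{0,1,2\}$ be the offset provided, and set $\alpha_i \eqdef \ff^{(j_i)}(\ttt_{3i+1})$, $\beta_i \eqdef \ff(\alpha_i)$, $\gamma_i \eqdef \ff^{(2)}(\alpha_i)$, so that (i) gives $\alpha_i \prec_A 0 \prec_A \beta_i \prec_A 1 \prec_A \gamma_i$. The key lemma I will prove is that (ii) forces the three orders induced by $\prec_A$ on the $\alpha$'s, the $\beta$'s, and the $\gamma$'s to coincide as permutations of the block indices. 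This will follow from three applications of (ii): applied to a pair $(\alpha_i, \alpha_{i'})$, the conclusion $\alpha_i \prec_A 1 \prec_A \alpha_{i'}$ contradicts (i), so the hypothesis must fail, yielding $\alpha_i \prec_A \alpha_{i'} \Rightarrow \beta_i \prec_A \beta_{i'}$; the analogous argument applied to $(\beta_i, \beta_{i'})$ (both in $(0,1)$) and to $(\gamma_i, \gamma_{i'})$ (both $\succ_A 1$) closes the cycle because $\ff(\gamma_i) = \alpha_i$.

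With this consistency in hand, I would pick reals $v_1, \ldots, v_{3n} \in (-\infty,0)$ with $v_i < v_{i'}$ iff $\alpha_i \prec_A \alpha_{i'}$, and set $u_i \eqdef f^{(3-j_i)}(v_i)$ so that the value placed at $\alpha_i$ in $\Lambda(u)$ is exactly $v_i$. By Claim~\ref{c:f}, each block triple $(u_i, f(u_i), f^{(2)}(u_i))$ is a cyclic shift of $(v_i, f(v_i), f^{(2)}(v_i))$ and thus contains one element in each of the three intervals, matching the structure imposed by (i). To verify that $\Lambda(u) \in A$ I would compare the coordinates pairwise: cross-interval comparisons match because both orders agree with the positions of $0$ and $1$; comparisons among the $\alpha$'s match by construction; and comparisons among the $\beta$'s (resp.\ $\gamma$'s) match because $f$ (resp.\ $f^{(2)}$) is strictly increasing on $(-\infty,0)$, and the three induced orders coincide by the key lemma. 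The main obstacle I expect is this key lemma, since (ii) is stated pairwise and has to be applied at all three levels of the cyclic lift in order to propagate order-consistency between the $\alpha$, $\beta$, and $\gamma$ layers.
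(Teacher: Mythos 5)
Your proof is correct and follows essentially the same route as the paper: necessity is read off from Claim~\ref{c:f}, and sufficiency is obtained by choosing an increasing sequence of reals in one reference interval and transporting it to each block by powers of $f$, with conditions (i) and (ii) supplying exactly the layer-consistency needed for the lifted point to realize $\prec_A$. The only cosmetic difference is that you anchor the construction in $(-\infty,0)$ and prove the three-layer order-coincidence lemma up front, whereas the paper anchors in $(1,+\infty)$ and verifies afterwards, via the increasing action of $\ff$, that the constructed order coincides with $\prec_A$.
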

\begin{proof}
  Let us first see why the conditions are necessary. Let $c = (x_1,
  x_2, \ldots, z_n) \in \R_*^{3n}$ such that $\Lambda(c) \in A$. Fix
  some $0\leq i < n$. As $j$ ranges over $\{0,1,2\}$, the coordinate
  $\ff^{(j)}(\ttt_{3i+i})$ of $\Lambda(c)$ ranges over $\{x_i,f(x_i),
  f^{(2)}(x_i)\}$, and Condition~(i) holds because $f$ permutes the
  intervals $(-\infty,0)$, $(0,1)$ and $(1,+\infty)$ circularly. The
  cases $n \le i < 3n$ are similar. Condition~(ii) follows in a
  similar manner from the fact that $f$ permutes the intervals
  $(-\infty,0)$, $(0,1)$ and $(1,+\infty)$ circularly and is
  increasing on each of them.

  \bigskip
  
  To examine sufficiency we need some notations. We let $\var =
  \{0,1,\ttt_1, \ldots, \ttt_{9n}\}$. Given an order $\prec$ on $\var$
  and two elements $a,b \in \var$ we write $(a,b)_{\prec} \eqdef \{ c
  \in \var: a \prec c \prec b\}$. We also write $(\cdot,a)_{\prec}$
  for the set of elements smaller than $a$, and $(a, \cdot)_{\prec}$
  for the set of elements larger than~$a$, and $[a,b)_{\prec}$,
    $[a,b)_{\prec}$ or $[a,b]_{\prec}$ to include one or both bounds
      in the interval.

  \medskip

  Let $\prec_*$ be an order on $\{0,1,\ttt_1, \ldots, \ttt_{9n}\}$
  such that $0 \prec_* 1$. By Condition~(i), $(1,\cdot)_{\prec_*}$ has
  size $3n$, so let us write $(1,\cdot)_{\prec_*} = \{b_1, b_2, \ldots,
  b_{3n}\}$ with $1 {\prec_*} b_1 {\prec_*} b_2 {\prec_*} \ldots
  {\prec_*} b_{3n}$. Condition~(i) also ensures that for every $0\leq
  i< 3n$, exactly one of $\{\ttt_{3i+1}, \ff(\ttt_{3i+1}),
  \ff^{(2)}(\ttt_{3i+1})\}$ belongs to $(1,\cdot)_{\prec_*}$.  Hence,
  for every $0\leq i<3n$ there are uniquely defined integers $0\leq
  \alpha(i)\leq 2$ and $1\leq \beta(i)\leq 3n$ such that $b_{\beta(i)}
  = \ff^{\alpha(i)}(\ttt_{3i+1})$.

  \medskip

  We next pick $3n$ real numbers $1 < r_1 < r_2 < \ldots < r_{3n}$, put

  \[\begin{array}{llll}
     x_i &= &f^{3-\alpha(i)}\pth{r_{\beta(i)}},& \text{ for } 0\leq i<n  \\
     y_i &= &f^{3-\alpha(i)}\pth{r_{\beta(i)}},& \text{ for } n\leq i<2n  \\
     z_i &= &f^{3-\alpha(i)}\pth{r_{\beta(i)}},& \text{ for } 2n\leq i<3n,
  \end{array}\]

  \noindent
  and let $p = (x_1, \ldots, x_n, y_1, \ldots, y_n, z_1, \ldots, z_n)
  \in \mathbb{R}^{3n}$. Note that $\Lambda(p)$ lies in a
  full-dimensional cell of the arrangement $\h$; let us denote it by
  $A$.

  \medskip

  Now, $0$ precedes $1$ in both $\prec_*$ and $\prec_A$. Also, $[1,\cdot)_{\prec_*} =
    [1,\cdot)_{\prec_A}$ and the two orders coincide on that interval by construction of~$p$.
  Remark that $\ff$ acts similarly for both orders:

\begin{itemize}
\item $\ff$ maps $[1,\cdot)_{\prec_*}$ to $(\cdot,0]_{\prec_*}$
  increasingly for $\prec_*$ by Conditions~(i) and~(ii).

\item $\ff$ maps $[1,\cdot)_{\prec_A}$ to $(\cdot,0]_{\prec_A}$
  increasingly for $\prec_A$ by definition of $p$ and Claim~\ref{c:f}.
\end{itemize}
    
\noindent
We therefore also have $(\cdot,0]_{\prec_*} = (\cdot,0]_{\prec_A}$ and
the orders coincide on that interval as well. The same argument
applied to $\ff^2$ shows that $[0,1]_{\prec_*} = [0,1]_{\prec_A}$
and that the two orders coincide on that interval as
well. Altogether, $\prec_*$ and $\prec_A$ coincide.
\end{proof}

\section{Geometric interpretation of the combinatorial lifting}
\label{s:interpretation}

Let us take a moment to consider the geometric meaning of our parameterization and lifting.

\subparagraph{Summary.} 

In short, Section~\ref{s:parameter} reduced our initial problem to the
more specialized one of realizing a triple of permutations by the
lines $\ell_x$, $\ell_y$, $\ell_z$ and triangles with a vertex on each
line (Lemma~\ref{l:canonical}). The canonical coordinate system of
$\R^3$ induces natural coordinate systems on each of these lines, and
we use it to parameterize the positions of the triangles'
vertices. The set of parameters that correspond to a geometric
realization of our three permutations is then seen to form a
semi-algebraic set~$R$. In Section~\ref{s:combilift}, we introduced
the lift $\Lambda$ to analyze $R$ combinatorially provided that we
specify some extra information: the comparisons between each variable
and the constants $0$ and $1$ (Lemma~\ref{l:lift1} and
Proposition~\ref{p:cellR3n}).

\subparagraph{Tags.}

Let us reformulate this extra information geometrically. Consider
three oriented lines in~$\R^3$. Each line can be translated so as to
simultaneously intersect the other two lines; we mark these
intersection points on the two (non-translated) lines. Altogether, we
collect two points per line, which we label $\0$ and $\1$ with the
convention that $\0$ comes before $\1$ in the orientation of the
line. Equivalently, these six points can be obtained by considering
the unique parallelotope that has three disjoint edges supported by
the lines, and marking the vertices of these edges (see
Figure~\ref{fig:2d}-left). Now, specifying how $\ttt_{3i+1}$ (say for
$0 \le i < n$) compares to $0$ and $1$ in $\var$ is equivalent to
specifying where $x_i$ lies compared to $\0$ and $\1$ on~$\ell_x$. The
combinatorial lifting therefore highlights that the position of the
triangle vertices' with respect to the parallelotope is a useful
information for checking geometric realizability.

\begin{figure}[t]
  \begin{center}\includegraphics[page=2]{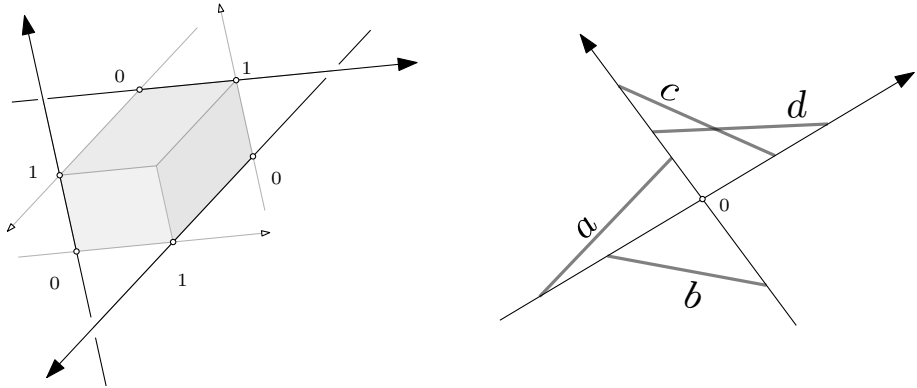}\end{center}
  \caption{Left: The marked point associated with three skew lines in
    $\R^3$. Right: In $\R^2$, if two lines crossing at $\0$ intersect
    segments $\{a,b,c,d\}$ in the (extended) orders $(ab{\0}cd,
    b{\0}adc)$, then the sets $c$ and $d$ must cross.\label{fig:2d}}
\end{figure}

\subparagraph{Analogy with the planar case.}

A similar observation was used in the plane (see
\emph{e.g.}~\cite{katchalski1985geometric}). Consider two lines in
$\R^2$, crossing in $\0$, and a family of segments, where each segment
has an endpoint on each line (see Figure~\ref{fig:2d}-right). Every
segment lies in a (closed) quadrant formed by the lines, and two
segments intersect if and only if they lie in the same quadrant and
appear in different orders when seen ``from $\0$''. The quadrant
containing a given segment is determined by the positions of that
segment's endpoints with respect to $\0$. As mentioned in the
introduction, a simple case analysis then yields that $(1234,2143)$
has no geometric realization in the plane. (See~\cite[Figure
  3.4]{asinowski1999geometric} for another example of such case
analysis.)

\subparagraph{Changes between two and three
dimensions.}

In the plane, given two permutations and the position of the crossing
point of the two lines, either all choices of positions yield pairwise
disjoint segments, or none of them does. In~$\R^3$, the polynomials
describing whether two triangles are disjoint requires in some cases
to compare pairs $\{x_i, f(z_j)\}$, $\{y_i, f(x_j)\}$ or
$\{z_i,f(y_j)\}$ (Table~\ref{t:decomp}). This makes it possible for
two pairs of triangles, one crossing and the other disjoint, to
realize the same three tagged permutations, \emph{i.e.}  have their
vertices in the same position relative to the parallelotope of the
lines.

\subparagraph{Tagged permutations and patterns.}

Formally, we define a \emph{tagged permutation} as a permutation of
$\{\0, \1, 1,2, \ldots, n\}$ in which $\0$ precedes $\1$. We call a
triple of tagged permutations a \emph{tagged pattern}. A
\emph{canonical realization} of a tagged pattern is a set of
triangles, with vertices on $\ell_x$, $\ell_y$ and $\ell_z$, such that
$\ell_x$ (resp. $\ell_y$, $\ell_z$) intersects the triangles in the
first (resp. second, third) permutation and such that the tagged
corners of the parallelotope appear in the right position on each
line. 

Our experiments will use two more notions. Two tagged patterns are
\emph{equivalent} for canonical realizability if one can be
transformed into the other by (i) relabeling the symbols other than
$\0$ and $\1$ bijectively, and (ii) applying a circular permutation to
the triple. A tagged pattern is \emph{minimally forbidden} if it has
no canonical realization, and deleting any symbol other than $\0$
and~$\1$ from the three tagged permutations produces a tagged pattern
which has a canonical realization.

\section{Algorithm}
\label{s:algorithm}

We now present an algorithm that takes a tagged pattern as input and
decides if it admits a canonical realization. Our initial problem of
testing the geometric realizability of a triple of permutations of size $n$
reduces to $8 \binom{n+2}{2}^3$ instances of that problem.

\subsection{Outline}

Following Sections~\ref{s:parameter} and~\ref{s:combilift}, we search
for an order on $\{0,1,\ttt_1, \ttt_2, \ldots, \ttt_{9n}\}$ satisfying
the conditions of Proposition~\ref{p:cellR3n} and the formula $\psi$
(which defines $S$). To save breath, we call such an order
\emph{good}.  We say that \emph{triangles $i$ and $j$ are disjoint in
  a partial order} $\po$ if for every $c \in \R^{3n}$ such that the
order on $\Lambda(c)$ is a linear extension of $\po$, the triangles
$i$ and $j$ of $\Tr(c)$ are disjoint.

\medskip

Our algorithm gradually refines a set of partial orders
on $\{0,1,\ttt_1, \ttt_2, \ldots, \ttt_{9n}\}$ with the constraint
that, at any time, every good order is a linear extension of at least
one of these partial orders. (Note that we do not need to make $\psi$
explicit.) Every partial order is refined until all or none of its
extensions are good, so that we can report success or discard that
partial order. Refinements are done in two ways:

\begin{itemize}
\item \emph{branching} over an uncomparable pair, meaning duplicating
  the partial order and adding the comparison in one copy, and its
  reverse in the other copy,
\item \emph{forcing} a comparison when it is required for the formula
  $S$ to be satisfiable.
\end{itemize}

\noindent
We keep our algorithm as simple as possible to facilitate the
verification of the algorithm, the code in Appendix~\ref{a:code}, and
the geometric results proven with their aid. This comes at the cost of
some efficiency, but we discuss some possible improvements in
Section~\ref{s:algodiscuss}.

\subsection{Description}

Our poset representation stores (i) for each lifted variable the
interval $(\cdot,0)$, $(0,1)$ or $(1,\cdot)$ that contains it, and
(ii) a directed graph over the variables contained in the interval
$(1, \cdot)$. The graph has $3n$ vertices, by Lemma~\ref{l:lift1}. To
compare two variables, we first retrieve the intervals containing
them. If they differ, we can return the comparison readily. If they
agree, then up to composing by $f$ or $f^{(2)}$ we can assume that
both variables are in $(1, \cdot)$ and we use the graph to reply.  We
ensure throughout that the graph is saturated, \emph{i.e.} is its own
transitive closure. In our implementation, initialization takes
$O(n^3)$ time, elements comparison takes $O(1)$ time, and edge
addition to the graph takes $O(n^2)$ time.

\bigskip

We start with the poset of the comparisons forced by the tagged
pattern: all pairs $(x_i,x_j)$, $\pth{f(x_i), f(x_j)}$, \ldots,
$\pth{f^{(2)}(z_i), f^{(2)}(z_j)}$ as well as pairs separated by $0$
or $1$. We next collect in a set $U$ the comparisons missing to
compute the vectors $v(i,j)$.

\begin{lemma}\label{l:U}
  $U$ contains only pairs of the form $z_k-f(y_k)$, $x_k-f(z_k)$, or
  $y_k-f(x_k)$.
\end{lemma}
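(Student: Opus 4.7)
The plan is a direct case analysis based on Table~\ref{t:decomp}. First I would note that the initial poset already contains every comparison within a single line (i.e.\ comparisons among the $x_i$'s, among the $f(x_i)$'s, among the $f^{(2)}(x_i)$'s, and similarly for $y$'s and $z$'s) together with every comparison of such a variable against $0$ or $1$; this is immediate from the definition of a tagged pattern, since the tags $\0$ and $\1$ fix the positions of every coordinate relative to $0$ and $1$, and the three permutations fix the relative order of the $x$-coordinates, of the $y$-coordinates and of the $z$-coordinates (and hence of their images under $f$ and $f^{(2)}$, using Claim~\ref{c:f}).

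Next I would enumerate the six entries of $v(i,j)$. Each of them is an orientation of the form $[\,A,B,C,D\,]$ where the four points lie on at most three of the lines $\ell_x,\ell_y,\ell_z$ (never all on the same line, since the triangles are non-degenerate). Matching the distribution of points among the three lines to the rows of Table~\ref{t:decomp}, one sees that each of the six orientations fits exactly one of the last three rows (the first three rows correspond to four points lying on only two of the lines, which does not occur inside $v(i,j)$). For instance, $[X_i,Y_i,Z_i,X_j]$ has two points on $\ell_x$ and one each on $\ell_y,\ell_z$, so it matches the row $[X_a,X_b,Y_c,Z_d]$ with $(a,b,c,d)=(i,j,i,i)$, giving the decomposition
\[(x_i-x_j)(y_i-1)\bigl(z_i-f(y_i)\bigr).\]

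For each of the six cases, I would check that the first two factors are already settled by the initial poset (they are either a comparison of two coordinates on the same line, or a comparison of a coordinate with $1$), so the only potentially missing comparison is the third factor, and it always has the form $z_k-f(y_k)$, $x_k-f(z_k)$ or $y_k-f(x_k)$ for some single index $k\in\{i,j\}$. Concretely, the six orientations yield, in order, the missing factors $z_i-f(y_i)$, $x_i-f(z_i)$, $y_i-f(x_i)$, $z_j-f(y_j)$, $x_j-f(z_j)$, $y_j-f(x_j)$, which is precisely the announced list.

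The ``obstacle'' is purely bookkeeping: one must be careful to identify the right instantiation of $(a,b,c,d)$ in each row so that the two ``easy'' factors really do coincide with comparisons already present in the initial poset, and only the single $f$-factor remains. Once that matching is done correctly, the conclusion follows by inspection of the six cases.
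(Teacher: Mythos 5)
Your proposal is correct and follows essentially the same route as the paper: each entry of $v(i,j)$ involves the three vertices of one triangle plus one vertex of the other, so it matches one of the last three rows of Table~\ref{t:decomp}, and there the factors of the form $(u_a-u_b)$ and $(u_c-1)$ are already fixed by the tagged permutations, leaving only the single $f$-factor (of the form $z_k-f(y_k)$, $x_k-f(z_k)$ or $y_k-f(x_k)$) possibly undecided. Your explicit enumeration of the six cases is just a slightly more detailed version of the paper's "other cases are similar".
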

\begin{proof}
  Every orientation predicate considered involves three points of the
  same index. Consider for instance $[X_i,Y_i,Z_i,X_j]$. Following
  Table~\ref{t:decomp}, this decomposes into
  $(x_i-x_j)(y_i-1)(z_i-f(y_i))$ and only the sign of the last term
  may be undecided. Other cases are similar and show that $U$ can only contain terms the form $z_k-f(y_k)$, $x_k-f(z_k)$, or
  $y_k-f(x_k)$.
\end{proof}

\medskip

\noindent
Every pair in $U$ corresponds to two variables \emph{with same index},
so $|U| \le 3n$. If $U$ contains the three pairs with a given index,
then two of the eight choices for these three comparisons are cyclic,
and can thus be ignored. We thus have at most $6^n$ ways to decide the
order of the undeterminate pairs of $U$; call them
\emph{candidates}. For each candidate, we make a separate copy of our
current graph and perform the following operations on that copy:

\begin{enumerate}

\item We add the $|U|$ edges ordering the undecided pairs as fixed by
  the candidate and compute its transitive closure. We check that the
  result is acyclic; if not, we discard that candidate (as it makes
  contradictory choices) and move to the next candidate.

  \medskip
  
\item Let $\po$ denote the resulting partial order. We consider every
  $1 \le i<j \le n$ in turn. (Note that $v(i,j)$ is determined and
  equal for all linear extensions of $\po$.)

  \begin{enumerate}[2a]
  \item If $v(i,j)_1 =v(i,j)_2 = v(i,j)_3$ or $v(i,j)_4 =v(i,j)_5 =
    v(i,j)_6$ then triangles $i$ and $j$ are disjoint in $\po$. We
    move on to the next pair $(i,j)$.
    
    \medskip

  \item Otherwise, the extensions of $\po$ in which the triangles $i$
    and $j$ are disjoint are those in which $[A_i,B_i,A_j,B_j] = 1$ or
    $[A_i,C_i,C_j,A_j] = 1$ (in the notations of
    Section~\ref{s:parameter}). Lemma~\ref{l:garcimore} asserts that
    $\po$ already determines at least one of these two predicates.

  \medskip

  \begin{enumerate}[2b1]

  \item If both tests are determined to false, then triangles $i$ and
    $j$ intersect in $\po$. We then discard $\po$ and move on to the
    next candidate.

    \medskip
  
  \item If one test is determined to false and the other is
    undetermined, then that second test must evaluate to true in every
    good extension of $\po$. Again, by Table~\ref{t:decomp} we are
    missing exactly one comparison to decide that test. We add it to
    our graph.

    \medskip

  \item In the remaining cases, at least one test is determined to
    true, so triangles $i$ and $j$ are disjoint in $\po$. We move on
    to the next pair $(i,j)$.
    
  \end{enumerate}
    
  \end{enumerate}

  \item If we exhaust all $(i,j)$ for a candidate, then we report ``realizable''.

  \medskip

  \item If we exhaust the candidates without reaching step $3$, then
    we report ``unrealizable''.

\end{enumerate}

\bigskip

\noindent
This algorithm relies on property whose computer-aided proof is discussed in Section~\ref{s:experiments}:

\begin{lemma}\label{l:garcimore}
  At step~$2b$, at least one of $[A_i,B_i,A_j,B_j]$ or
  $[A_i,C_i,C_j,A_j]$ is determined.
\end{lemma}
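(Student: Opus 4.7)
The claim is a finite combinatorial assertion about which orientations the partial order $\po$ fixes after the renaming of step~$2b$, and the natural route is an exhaustive case analysis. The renaming is parameterized by the circular permutation choice for $(A_i, B_i, C_i)$ (three options) and for $(A_j, B_j, C_j)$ (three options), together with the two possible $B \leftrightarrow C$ swaps at each triangle: $36$ combinations a priori, cut down dramatically by the cyclic symmetry $X \to Y \to Z$ of the canonical configuration and by the symmetry $i \leftrightarrow j$. The plan is to catalogue these cases and, for each, express the two target predicates $[A_i, B_i, A_j, B_j]$ and $[A_i, C_i, C_j, A_j]$ in terms of the original $X_k, Y_k, Z_k$ and then apply Table~\ref{t:decomp}.

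In each decomposition, two of the three factors are automatically fixed: a same-type difference (forced by the tagged pattern) and a comparison with $1$ (forced by the tags $\0, \1$). Only a factor of the form $u - f(v)$ can be undetermined, and it \emph{is} in fact determined whenever $u$ and $v$ share an index $k$, since the comparisons $x_k - f(z_k)$, $y_k - f(x_k)$, $z_k - f(y_k)$ constitute the set $U$ and are pinned down by the candidate. A short computation shows that in most renamings at least one of the two targets is either a 2-letter orientation (only two line types among its four points, so no $f$ appears) or has its $f$-factor in this same-index regime; in those cases the lemma follows immediately.

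The hard cases are those in which both targets are genuine 3-letter orientations whose undetermined factor is a cross-index comparison $u_i - f(v_j)$ with $i \ne j$. For these one must argue that the six signs $v(i,j)_1, \ldots, v(i,j)_6$, which are already computed at step~$2b$ (their decompositions happen to involve only same-type or same-index factors, hence are fully known), combined with the monotonicity of $f$ on each of the intervals separated by $0$ and $1$ (Claim~\ref{c:f}) and the candidate's $U$-choices, suffice to pin down the sign of at least one target. Concretely, one writes out the sign system imposed on the twelve variables $\{x_i,y_i,z_i,x_j,y_j,z_j,f(x_i),\ldots,f(z_j)\}$ by the six $v(i,j)_k$ equations and the $U$-choices, and checks that it is consistent with only one sign for at least one target factor.

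Since this last step branches on the fine structure of the tagged pattern and mixes the monotonicity constraints with sign products, carrying it out by hand is error-prone. The cleanest route — and the one used here — is to enumerate all sign patterns of the relevant variables compatible with the entry conditions of step~$2b$ and verify the claim by symbolic evaluation; this is the computer-aided verification referenced in Section~\ref{s:experiments}. The main obstacle is therefore not a single conceptual step but the case-bookkeeping, which is best discharged by a machine.
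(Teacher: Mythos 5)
Your proposal is correct and follows essentially the same route as the paper: since the statement concerns only two triangles, it reduces to a finite case analysis that is discharged by machine, which is exactly what the paper does by running the algorithm on all tagged patterns of size two with an exception (line~94 of the code) that would be raised if both predicates were ever undetermined, and observing that it never fires. Your preliminary analysis of which factors are automatically determined (same-type differences, comparisons with $1$, and the same-index pairs fixed by the candidate) matches Lemma~\ref{l:U} and the structure of Table~\ref{t:decomp}, so the residual hard cases you hand to the computer are the same ones the paper's exhaustive check covers.
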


\subsection{Discussion}
\label{s:algodiscuss}

Let us make a few comments on our algorithm.

\subparagraph{Correctness.}

Let $\po_0$ denote the initial poset. First, remark that we explore
the candidates exhaustively, so every good extension of $\po_0$ is a
good extension of $\po_0$ augmented by (at least) one of the
candidates. Next, consider the poset $\po$ obtained in step 2. When
processing a pair $(i,j)$, we either discard $\po$ if we detect
that $i$ and $j$ intersects in it (2b1) or we move on to the next pair
$(i,j)$ after having checked (2a, 2b3) or ensured (2b2) that triangles
$i$ and $j$ are disjoint in $\po$. If we reach step~3, then all
extensions of the current partial order are good and we correctly
report feasibility. If a candidate is discarded then no linear
extension of $\po$ augmented by that candidate is a good order. If we reach Step~4, then every candidate has been discarded, so no linear
extension of $\po_0$ was a good order to begin with, and we correctly
report unfeasibility.

\subparagraph{Complexity.}

Initializing the poset and computing $U$ take $O(n^3)$ time. We have
at most $6^n$ candidates to consider. Step~1 takes $O(n^3)$ time. The
steps 2a-2b3 are executed $O\pth{n^2}$ times, and the bottleneck among them is 2b2, which takes $O(n^2)$ time. Altogether, our algorithm decides if a tagged pattern is realizable in $O\pth{6^nn^4}$ time.

\subparagraph{Improvements.}

In practice, the algorithm we presented can be sped up in several
ways. For example, it is much better to branch over the pairs of $U$
one by one. Once a branching is done, we can update $U$ by removing
the pairs that have become comparable, and thus avoid examining
candidates that would get discarded at Step~1. Also, it pays off to
record the forbidden tagged patterns of small size, and, given a
larger tagged pattern to test, check first that it does not contain a
small forbidden pattern.

\subparagraph{One-sided certificate.}

If the algorithm reaches Step~3, we actually know a poset for which
every linear extension is good. This means that we can compute an
arbitrary linear extension to obtain an order on the variables in
$(1,\cdot)$. We can then assign to these variables any values that
satisfy this order, say by choosing the integers from $2$ to $3n+1$,
and then propagate these values via $f$ and $f^{(2)}$ to all lifted
variables. From there, we can extract the values of $x_1,x_2, \ldots,
z_n$ of a concrete realization of our tagged pattern. In this way, all
computations are done on (relatively small) rationals and are
therefore easy to do exactly.

\section{Experimental results}
\label{s:experiments}

We now discuss our implementation of the above algorithm as well as
its experimental use. Remember that we call a tagged pattern forbidden
if it admits no \emph{canonical} realization. We make the raw data
available (see Appendix~\ref{a:raw}).

\subparagraph{Implementation.}

We implemented the algorithm of Section~\ref{s:algorithm} in Python 3, and comment on its key functions in Appendix~\ref{a:code}. For simplicity, our implementation makes one  adjustment to the algorithm: we branch over all $2^{|U|}$ choices for the pairs of undecided variables; so, we take $8$ choice per $k$, rather than $6$. Altogether, the implementation amounts to $\sim 470$  lines of (commented) code and is sufficiently effective for our experiments: on a standard desktop computer, finding all realizable triples of size~$6$ (and a realization when it exists) takes about $40$ minutes, whereas verifying that no minimally forbidden tagged pattern of size $6$ exists took up about a month of computer time; the difference of course is that in the former, for realizable triples we do not have to look at all positions of tags.

\subparagraph{Proof of Lemma~\ref{l:garcimore}.}

The statement concerns only two triangles and can be shown by a simple
case analysis. Our code sets up an exception that
is raised if the statement of the lemma fails (cf line 94 in the code in Appendix~\ref{a:exception}). Checking the realizability of all tagged patterns on two elements exhausts the case analysis, and the exception is not raised.

\subparagraph{Minimally forbidden patterns.}

To state the minimally forbidden tagged patterns of size~$3$ we
compress the notation as follows. We use $\{uv\}$ to mean ``$uv$ or
$vu$''. Symbols that are omitted may be placed anywhere (this may
include $\0$ and $\1$). We use $x_i = y_j$ to mean ``any pattern in
which the $i$th symbol on the $1$st tagged permutation equals the
$j$th symbol of the $2$nd tagged permutation''.

\begin{proposition}\label{p:tag}
  The equivalence classes of minimally forbidden tagged patterns are:

  \begin{enumerate}[(i)]
  \item For size $2$, $(ab\0,\1ab,ab)$, $(\0ab,ab\1,ba)$,
    $(ab\0,ba\1,ab)$, and $(\0ab,\1ba,ba)$.

    \medskip
    
  \item For size $3$, $(\{ab\}\0, \{ab\}c\0, z_2 = y_1)$, $(\{ab\}c\0,
    \1\{ab\}, z_2 = x_1)$, $(\{ab\}\0, \1c\{ab\}, z_2 = y_3)$,
    $(\1c\{ab\}, \1\{ab\}, z_2 = x_3)$, $(abc\0, b\1ac,ca\0b)$, and
    $(\1abc,b\1ca,ac\0b)$.

    \medskip

  \item For size $4$, the taggings of $(abcd, badc, cdab)$ that
    contains $(\0b\1, \1d, a\0)$ or $(b\0c, \1a, a\0)$, and the
    taggings of $(abcd, badc, dcba)$ that contains $(b\0c,\0d\1,\1c)$
    or $(c\0,\{\0ba\}\{\1dc\},\1c)$.

    \medskip

  \item None for size $5$ and $6$.
  \end{enumerate}
\end{proposition}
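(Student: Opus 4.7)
The plan is to combine a computer-assisted exhaustive search using the algorithm of Section~\ref{s:algorithm} with direct geometric arguments for the smallest cases, proceeding size by size. First I would enumerate, for each $n \in \{2,3,4,5,6\}$, a system of representatives for the equivalence classes of tagged patterns on $n$ symbols; recall that equivalence allows relabeling the non-tag symbols and cyclically permuting the triple, so the representatives can be chosen by fixing $\pi_1$ to be the identity and then quotienting by the cyclic action on the remaining two permutations and by the freedom to insert $\0$ and $\1$ in each of the $\binom{n+2}{2}$ positions per line. For each representative I would then invoke the algorithm to decide whether it admits a canonical realization, thereby obtaining the complete list of forbidden tagged patterns at size~$n$.

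From the list of forbidden patterns I would extract the minimal ones by the standard deletion test: a forbidden pattern is minimal exactly when, for every non-tag symbol $s$, deleting $s$ from all three tagged permutations produces a pattern that is realizable. Since realizability of each sub-pattern is again decided by the algorithm, this is a purely mechanical sweep. The output should then be grouped into equivalence classes, giving the six families in~(ii) and the two permutation triples with prescribed sub-tag constraints in~(iii); the fact that at sizes~$5$ and~$6$ every forbidden pattern is already forbidden because it contains one of the smaller patterns yields~(iv). For sizes~$2$ and~$3$, I would additionally provide the direct geometric proofs promised for Section~\ref{s:analysis}: in each case one argues by contradiction from a hypothetical realization, using the parallelotope of $\ell_x,\ell_y,\ell_z$ and the side operator $\odot$ to derive a contradiction between the forced orientations of the three triangles involved.

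Correctness of the computer-assisted part rests on two ingredients already in place: the reduction of realizability to the existence of a good order (Lemma~\ref{l:lift1} and Proposition~\ref{p:cellR3n}), and the soundness of the branching-and-forcing algorithm (which itself relies on Lemma~\ref{l:garcimore}, verified by the two-element case analysis). Assuming these, each individual decision call is a valid proof-by-reduction, so the aggregate statement of Proposition~\ref{p:tag} reduces to the claim that the enumeration of representatives is complete and the deletion test is correctly implemented; both are routine but must be double-checked.

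The main obstacle is computational scale rather than conceptual difficulty. As noted in Section~\ref{s:experiments}, verifying the absence of minimally forbidden tagged patterns at size~$6$ already requires roughly a month of computer time with the implementation in Appendix~\ref{a:code}; to make the enumeration tractable I would exploit the ``contains a known forbidden sub-pattern'' shortcut mentioned in Section~\ref{s:algodiscuss}, so that most size-$5$ and size-$6$ candidates are rejected without invoking the full decision procedure. The residual difficulty is then book-keeping: ensuring that equivalence classes are not double-counted and that the compact notation ($\{uv\}$, omitted symbols, equalities $x_i = y_j$) accurately describes the families obtained, which I would handle by generating, for each compact representative in the statement, the full list of concrete tagged patterns it denotes and cross-checking against the raw output catalogued in Appendix~\ref{a:raw}.
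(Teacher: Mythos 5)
Your proposal takes essentially the same route as the paper: Proposition~\ref{p:tag} is established there precisely by the computer-aided exhaustive enumeration you describe --- deciding canonical realizability of all tagged patterns up to size~6 with the algorithm of Section~\ref{s:algorithm} (whose correctness rests on Lemma~\ref{l:lift1}, Proposition~\ref{p:cellR3n} and Lemma~\ref{l:garcimore}), extracting the minimally forbidden ones by the deletion test, and grouping them into equivalence classes matching the compact notation. The direct geometric arguments for sizes~2 and~3 are, exactly as in Section~\ref{s:analysis} of the paper, an independent check and not part of the proof of the proposition itself.
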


\subparagraph{Realization database.}

For every tagged pattern that our algorithm declared
realizable, we computed a realization (as explained in
Section~\ref{s:algodiscuss}) and checked it independently.

\subparagraph{Geometric permutations. }It remains to prove our statements on geometric
permutations:

\begin{proof}[Proof of Theorem~\ref{thm:five}]
  For every triple of permutations, we checked that it is realizable
  by trying all $8$ reversals and all $\binom72^3$ possible positions
  of $\0$ and $\1$, until we find a choice that does not contain any
  minimally forbidden tagged pattern of Proposition~\ref{p:tag}.
\end{proof}

\begin{proof}[Proof of Theorem~\ref{thm:so}]
  We argue by contradiction. Consider four disjoint convex sets met by
  lines $\ell_1$, $\ell_2$ in the order $abcd$ and $\ell_3$ in the
  order $badc$; assume that $\ell_1 \odot \ell_3 = - \ell_2 \odot
  \ell_3$. By the perturbation argument of Lemma~\ref{l:canonical}, we
  can assume that the three lines are pairwise skew and that the
  convex sets are triangles with vertices on these lines. Moreover,
  there exists a nonsingular affine transform $A$ that maps the
  unoriented lines $\ell_1$ to $\ell_x$, $\ell_2$ to $\ell_y$ and
  $\ell_3$ to $\ell_z$. Remark that $A$ either preserves or reverses
  all side operators. Since $\ell_x \odot \ell_z = \ell_y \odot
  \ell_z$ and $\ell_1 \odot \ell_3 = - \ell_2 \odot \ell_3$, the map
  $A$ sends the oriented lines $(\ell_1,\ell_2)$ to either $(\ell_x,
  -\ell_y)$ or $(-\ell_x, \ell_y)$. We used our program to check that
  none of $(abcd, dcba, badc)$, $(dcba, abcd, badc)$, $(abcd, dcba,
  cdab)$, $(dcba, abcd, cdab)$ admits a canonical realization. The
  statement follows.
\end{proof}

\section{Geometric analysis}
\label{s:analysis}

We present here an independent proof that the tagged patterns of
size~$2$ and~$3$ listed in Proposition~\ref{p:tag} do not have a
(canonical) realization. We do not prove the patterns are minimal, nor
do we prove that the list is exhaustive; these facts come from the
completeness of our computer-aided enumeration.

\subsection{Size two}

The following observation was used by Asinowski and
Katchalski~\cite{Asinowski2005}:

\begin{obs}\label{AK-obs}
  Let $X$ and $Y$ be compact convex sets and let $P$ and $Q$ be points
  in $\mathbb{R}^3$. Assume that $X,Y,P,Q$ are pairwise disjoint and
  that there exist lines  inducing the geometric permutation $(PXY)$
  and $(QYX)$. Any oriented line with direction $\vc{PQ}$ that
  intersects $X$ and $Y$, must intersect $X$ before $Y$.
\end{obs}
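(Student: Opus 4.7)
The natural strategy is to use the fact that $X$ and $Y$ are disjoint compact convex sets in $\R^3$, hence strictly separated by some affine hyperplane $H$. Write $H^+$ and $H^-$ for the two open half-spaces, labeled so that $X\subset H^+$ and $Y\subset H^-$. The entire proof then consists of tracking on which side of $H$ each of $P$ and $Q$ lies and observing that the direction $\vc{PQ}$ determines the order in which any oriented line parallel to it crosses $H^+$ and $H^-$.

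The line realizing the geometric permutation $(PXY)$ meets $X$ and then $Y$, so it crosses $H$ at a point strictly between its chord in $X$ and its chord in $Y$. Since $X$ lies between $P$ and $Y$ along this line, the point $P$ lies on the same side of the crossing as $X$, and hence $P\in H^+$. The symmetric argument applied to the line realizing $(QYX)$, which meets $Y$ before $X$, yields $Q\in H^-$.

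Because $P\in H^+$ and $Q\in H^-$, the vector $\vc{PQ}$ has a positive component along the normal to $H$ pointing from $H^+$ to $H^-$. Consequently any oriented line $\ell$ with direction $\vc{PQ}$ crosses $H$ from $H^+$ to $H^-$; if $\ell$ meets both $X\subset H^+$ and $Y\subset H^-$, its intersection with $X$ must precede its intersection with $Y$ in the orientation of $\ell$, as required. I do not foresee any real obstacle to the argument: the separating hyperplane is guaranteed by the compactness and disjointness of $X$ and $Y$, and the hypothesis that $\{P,Q\}$ is disjoint from $X\cup Y$ ensures that $P$ and $Q$ end up strictly in the open half-spaces rather than on $H$ itself.
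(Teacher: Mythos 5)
Your proof is correct and follows essentially the same route as the paper: take a plane strictly separating $X$ from $Y$, use the two given transversal orders to place $P$ on the side of $X$ and $Q$ on the side of $Y$, and conclude that any line with direction $\vc{PQ}$ crosses the plane from the $X$-side to the $Y$-side. The only minor quibble is that the strictness of $P\in H^+$ and $Q\in H^-$ comes from the strict separation of the compact sets together with the order along the transversal lines, not from the disjointness of $\{P,Q\}$ from $X\cup Y$, but this does not affect the argument.
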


\begin{figure}[!ht]
  \begin{center}\includegraphics[page=5,width = 7cm]{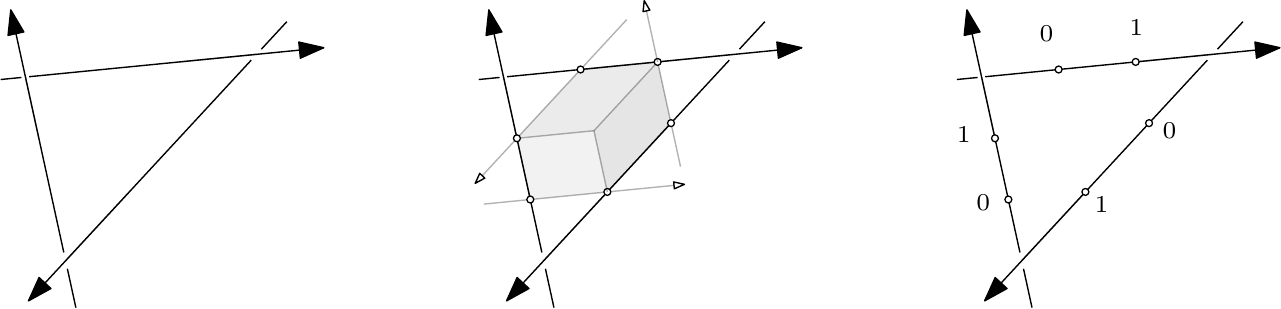}\end{center}
\end{figure}

\begin{proof}
  Refer to the figure. Let $h$ be a plane that separates $X$ and
  $Y$. The existence of the geometric permutations $(PXY)$ and $(QYX)$
  ensure that $h$ also separates $P$ and $Q$. Moreover, the halfspace
  bounded by $h$ that contains $X$ also contains $P$, so any line with
  direction $\vc{PQ}$ traverses $h$ from the side of $X$ to the side
  of $Y$.
\end{proof}

Observation~\ref{AK-obs} implies that $(ab\0,\1ab,ab)$,
$(\0ab,ab\1,ba)$, $(ab\0,ba\1,ab)$, and $(\0ab,\1ba,ba),$ are
forbidden. Indeed, consider, by contradiction, a realization of one of
these tagged patterns. Let $P$ be the point $\0$ on $\ell_x$ and $Q$
the point $\1$ on $\ell_y$. In each case, we can map $X$ and $Y$ to
$a$ and $b$ so that some line $\ell_P \in\{ \ell_x, -\ell_x\}$
realizes $PXY$ and $\ell_Q \in \{\ell_y, -\ell_y\}$ realizes
$QYX$. Then, Observation~\ref{AK-obs} implies that any line with same
direction as the line from $P$ to $Q$ must intersect $X$ before $Y$;
this applies to the line $\ell_z$ and contradicts the fact that the
configuration realizes the chosen tagged pattern.

\subsection{Size three}

To argue that the tagged patterns of size~$3$ of
Proposition~\ref{p:tag} are forbidden we first need a basic
observation concerning planar geometric permutations.

\begin{obs} \label{planar-obs}
  Suppose the $x$-axis and the $y$-axis are transversal to three
  disjoint convex sets in $\mathbb{R}^2$. Suppose furthermore that all
  the sets intersect the $x$-axis in points $x_1, x_2, x_3$, where
  either $x_1 < x_2< x_3\leq 0$ or $0\leq x_1< x_2< x_3$. Then the
  middle element of the geometric permutation induced by the $y$-axis
  can not equal the extreme element of the $x$-axis. (Here the extreme
  element refers to the set intersected the farthest away from the
  origin on the $x$-axis.)
\end{obs}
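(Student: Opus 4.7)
The plan is to reduce to the four forbidden planar tagged patterns recalled in the introduction: $(\0 ab, \0 ba)$, $(\0 ab, ab\0)$, $(ba\0, \0 ba)$, and $(ba\0, ab\0)$. Each of these configurations places two convex sets inside a common closed quadrant cut out by the two coordinate axes, and forces the two diagonals joining $(x_a, 0)$ to $(0, y_a)$ and $(x_b, 0)$ to $(0, y_b)$ to cross, contradicting disjointness.

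By reflecting the $x$-axis if necessary, I first reduce to the case $0 \le x_1 < x_2 < x_3$, in which the extreme element is set $3$; degenerate configurations (for instance one in which the origin lies in some set, or $x_1 = 0$) can be removed by a small perturbation that preserves both geometric permutations and the disjointness of the sets. Arguing by contradiction, I suppose that set $3$ is the middle element along the $y$-axis; then the $y$-axis geometric permutation is either $(1,3,2)$ or $(2,3,1)$, i.e.\ $y_1 < y_3 < y_2$ or $y_2 < y_3 < y_1$.

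For each of the two orderings I split according to the sign of $y_3$ and isolate an appropriate pair from $\{\{1,3\},\{2,3\}\}$. If $y_1 < y_3 < y_2$ and $y_3 \ge 0$, then $y_2$ and $y_3$ are both nonnegative and the pair $\{2,3\}$ realizes the forbidden pattern $(\0 23, \0 32)$; if $y_3 \le 0$, then $y_1$ and $y_3$ are both nonpositive and the pair $\{1,3\}$ realizes $(\0 13, 13\0)$. The ordering $y_2 < y_3 < y_1$ is symmetric: $\{1,3\}$ gives $(\0 13, \0 31)$ when $y_3 \ge 0$, and $\{2,3\}$ gives $(\0 23, 23\0)$ when $y_3 \le 0$. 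Each branch matches one of the four forbidden planar patterns, giving the required contradiction.

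The only real obstacle is the bookkeeping: one must verify that for every ordering on the $y$-axis and every sign of $y_3$ there is always a pair containing set $3$ whose two members lie together in one of the tagged quadrants of the plane. This is guaranteed by the fact that set $3$ is extreme on the $x$-axis, so that whichever of $y_1, y_2$ shares a half-line with $y_3$ on the $y$-axis automatically provides the pair in the required forbidden configuration.
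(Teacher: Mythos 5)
Your argument is correct and essentially coincides with the paper's proof: after normalizing to $0\le x_1<x_2<x_3$, both proofs split on the sign of $y_3$ and derive the contradiction from the fact that set $3$ and the neighbour lying on the same side of the origin sit in a common quadrant with interleaved distances from the origin, so the two segments $[x_k,y_k]$ cross --- which is exactly the content of the planar forbidden tagged patterns you invoke. The differences (the tagged-pattern bookkeeping and the perturbation remark for degenerate positions) are purely presentational.
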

\begin{proof}
  Lets call the sets $A_1$, $A_2$, $A_3$.  Up to symmetry we may
  assume that the sets intersect the $x$-axis in the points $0\leq
  x_1<x_2<x_3$ where $x_i\in A_i$. This means that $A_3$ is the
  extreme element of the $x$-axis. Now suppose for contradiction that
  $A_3$ is the middle element of the geometric permutation determined
  by the $y$-axis, so the $y$-axis meets the sets in points $y_i < y_3
  < y_j$, with $y_k \in A_k$.  If $0<y_3$, then the segment
  $[x_j,y_j]$ intersects the segment $[x_3,y_3]$, and if $y_3<0$, then
  the segment $[x_i,y_i]$ intersects the segment $[x_3,y_3]$.
\end{proof} 

\noindent
Now, Proposition~\ref{p:tag}~(ii) asserts that the patterns
\[\begin{array}{ll}
1. \quad (\{ab\}\0, \{ab\}c\0, z_2 = y_1), \quad & 
4. \quad (\1c\{ab\}, \1\{ab\}, z_2 = x_3), \quad \\
2. \quad (\{ab\}c\0, \1\{ab\},z_2 = x_1), \quad & 
5. \quad (abc\0, b\1ac,ca\0b), \quad \\
3. \quad (\{ab\}\0, \1c\{ab\}, z_2 = y_3),  & 
6. \quad (\1abc,b\1ca,ac\0b).
\end{array}\]
are forbidden. The basic idea of the proof of this fact is to show under the given conditions we can find another
transversal line that intersects one of the lines $\ell_x$, $\ell_y$,
$\ell_z$ to obtain a pair of crossing lines where we can apply
Observation \ref{planar-obs}.

\bigskip

For patterns 1--4, the line we are looking for is just a translate of
the line $\ell_z$. The translation can be found by considering an
appropriate projection. In Figure~\ref{fig:observation 6}  we see projections of
patterns 1--4 to the $xy$-plane.

\begin{figure}[t]
  \begin{center}\includegraphics[page=6, width = 14cm]{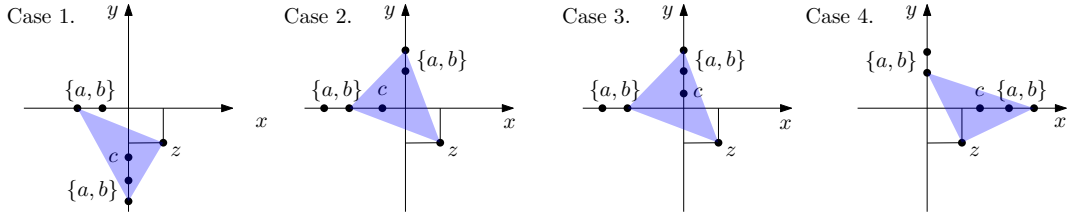}\end{center}
  \caption{Projections to the $xy$-plane of the patterns 1--4. In each case the projection of triangles $a$ and $b$ contain the point labeled by $c$. \label{fig:observation 6}}
\end{figure}

\noindent
Specifically:
\begin{itemize}
\item In cases 1 and 3 we see that the line $\ell'$ parallel to
  $\ell_z$ which passes through the point $c_y$ (the point of $c$ on
  the line $\ell_y$) is transversal to all the sets. Therefore we can
  apply Observation \ref{planar-obs} with $\ell_y$ as the $x$-axis and
  $\ell'$ as the $y$-axis.
\item In cases 2 and 4 we see that the line $\ell'$ parallel to
  $\ell_z$ which passes through the point $c_x$ is transversal to all
  the sets. Therefore we can apply Observation \ref{planar-obs} with
  $\ell_x$ as the $x$-axis and $\ell'$ as the $y$-axis.
\end{itemize}

\bigskip

For cases 5 and 6 we will need one additional observation:

\begin{obs}\label{boloid}
  Consider two lines $\ell_1$ and $\ell_2$ in $\R^3$ and three
  segments $a$, $b$ and $c$, each with one endpoint on $\ell_1$ and
  one endpoint on $\ell_2$. Assume that $a$, $b$ and $c$ are pairwise
  non-coplanar and not all three parallel to a common plane. Put $a_i
  = a \cap \ell_i$ and parameterize the segment $a_1a_2$ as
  $a(t) = (1-t) a_1 + t a_2$, $t \in [0,1]$. Let $\ell(t)$ denote
  the line through $a(t)$ that intersects or is parallel to the lines
  supporting respectively $b$ and $c$. The lines $\ell_1$ and $\ell_2$
  realize the same geometric permutation of $\{a,b,c\}$ if and only if
  for every $t \in [0,1]$ the line $\ell(t)$ intersects the three
  segments $a$, $b$ and $c$.
\end{obs}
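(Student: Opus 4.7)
The plan is to exploit the classical fact that three pairwise skew lines in $\R^3$, not all parallel to a common plane, lie on a unique ruled quadric $Q$ (a hyperboloid of one sheet), which is doubly ruled. Applied to the supporting lines of $a$, $b$, $c$, these three lines belong to one ruling of $Q$, while the second ruling consists precisely of the lines meeting (or parallel to) all three of them. Hence $\ell_1$, $\ell_2$, and each $\ell(t)$ all lie in the second ruling, and $\{\ell(t)\}_{t \in [0,1]}$ is a continuous $1$-parameter arc in this ruling with $\ell(0)=\ell_1$ and $\ell(1)=\ell_2$.

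Next I would introduce two auxiliary maps $\beta, \gamma \colon [0,1] \to \R P^1$ by $\beta(t) = \ell(t) \cap b_{\mathrm{line}}$ and $\gamma(t) = \ell(t) \cap c_{\mathrm{line}}$, where the codomains are the projective closures of the supporting lines of $b$ and $c$. The standard bijection between the two rulings of $Q$ implies that $\beta$ and $\gamma$ are restrictions of projective maps of $\R P^1$, and the endpoint conditions give $\beta(0)=b_1$, $\beta(1)=b_2$, $\gamma(0)=c_1$, $\gamma(1)=c_2$. A projective map sends an arc of $\R P^1$ to one of the two arcs between the images of its endpoints; thus $\beta([0,1])$ is either (a) the segment $b = [b_1,b_2]$, or (b) the complementary arc, passing once through the point at infinity, and likewise for $\gamma$. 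Since $\ell(t)$ always meets segment $a$ at $a(t)$ by construction, $\ell(t)$ meets all three segments for every $t \in [0,1]$ if and only if both $\beta$ and $\gamma$ are in case (a).

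It then remains to equate ``both $\beta$ and $\gamma$ in case (a)'' with ``$\ell_1$ and $\ell_2$ induce the same geometric permutation on $\{a,b,c\}$''. I would view $\ell(t)$ together with its direction at infinity $\infty(t)$ as a copy of $\R P^1 \cong S^1$ carrying four marked points $a(t), \beta(t), \gamma(t), \infty(t)$. The geometric permutation on $\ell(t)$ is then determined by the arc, among the three arcs that $\{a(t), \beta(t), \gamma(t)\}$ cuts out of $\R P^1$, that contains $\infty(t)$. By continuity (the three points remain distinct since $a$, $b$, $c$ are pairwise disjoint), the cyclic order of $a(t), \beta(t), \gamma(t)$ on $\ell(t)$ is constant in $t$, so the three arcs are well-defined throughout the deformation. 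Moreover, $\infty(t)$ can cross from one arc to an adjacent one only when $\ell(t)$ becomes parallel to $b_{\mathrm{line}}$ (a case-(b) event for $\beta$) or to $c_{\mathrm{line}}$ (for $\gamma$); it never crosses $a(t)$.

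The main obstacle I anticipate is the final case analysis on these transitions. The ``transition graph'' on the three arcs is a path (one edge labeled ``cross $b$'', one edge labeled ``cross $c$''), so zero transitions keep $\infty$ in its starting arc, a single transition moves it to an adjacent arc, and a sequence of two transitions --- one of each type --- is forced by the path structure to occur in a specific order, moving $\infty$ two steps along the path into the opposite end arc. In every nonzero case $\infty$ ends in a different arc from where it started, so $\infty(0)$ and $\infty(1)$ lie in the same arc if and only if no transition occurred, i.e., both $\beta$ and $\gamma$ are in case (a). This gives the desired equivalence.
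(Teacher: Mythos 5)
Your argument is correct and takes essentially the same route as the paper: both place the three supporting lines on the doubly ruled quadric, use the projective correspondence induced by the transversal ruling to see that the permutation realized by $\ell(t)$ can change only when $\ell(t)$ passes a position parallel to the line of $b$ or of $c$ (at most once each, never for $a$), and conclude with a short combinatorial case analysis --- your ``transition graph'' on the three possible middle elements is exactly the paper's decomposition of the ruling into three arcs by the special lines $\ell_a^*,\ell_b^*,\ell_c^*$. The one step you leave implicit, namely that a parallelism event genuinely swaps the middle element (rather than the far intersection point leaving and returning at the same end), is asserted at the same level of detail in the paper's proof (with a pointer to Veblen--Young), so it is not a substantive gap; incidentally, your identification of the quadric as a hyperboloid of one sheet is the accurate one under the hypothesis that the segments are not all parallel to a common plane.
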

\begin{proof}
  Let us first reformulate the statement. For $s \in \{a,b,c\}$ let
  $\ell_s$ denote the line supporting $s$. The intersection point
  $\ell(t) \cap \ell_b$ moves along $\ell_b$ from $\ell_1 \cap \ell_b$
  to $\ell_2 \cap \ell_b$ by travelling either along the segment $b$,
  or along $\ell_b \setminus b$. We are in the latter case if and only
  if for some $0<t<1$ the line $\ell(t)$ is parallel to $\ell_b$. The
  same holds for $c$. The statement therefore reformulates as:
  $\ell_1$ and $\ell_2$ realize the same geometric permutation of
  $\{a,b,c\}$ if and only if no line $\ell(t)$ is parallel to $\ell_b$
  or $\ell_c$. Now, the reformulated statement follows from elementary
  considerations on the geometry of ruled quadric surfaces (see for instance
  the projective geometry textbook of Veblen and
  Young~\cite[Chapter 11]{VebYou10}). Let us spell it out.

  Since $\ell_a$, $\ell_b$ and $\ell_c$ are in general position, these
  lines and all their common transversals are contained in a quadric
  surface $H$, specifically a hyperbolic paraboloid. The quadric $H$
  has two families of rulings, one which contains $\{\ell_a, \ell_b,
  \ell_c\}$ and the other that we denote as $R$. Every line in one family
  of rulings intersects or is parallel to every line in the other
  family of rulings.
  
  Consider a space parameterizing the lines of $\R^3$ bicontinuously
  (for instance via Pl\"ucker coordinates) and identify a line with
  its parameter point. In that space, $R$ forms a closed loop and it
  contains five points of interest to us: $\ell_1$ and $\ell_2$, as
  well as three special lines, which are transversals to two of
  $\{\ell_a, \ell_b, \ell_c\}$ and parallel to the third one. Let us
  denote the special lines by $\ell_a^*$, $\ell_b^*$ and $\ell_c^*$,
  where $\ell_s^*$ is parallel to $\ell_s$.

  The special lines $\ell_a^*$, $\ell_b^*$ and $\ell_c^*$ therefore
  split $R$ into three open arcs, and two line transversals to
  $\{\ell_a, \ell_b, \ell_c\}$ realize the same (untagged) geometric permutation
  if and only if these lines belong to the same arc. Indeed, as a line
  $\ell$ moves continuously on $R$, its intersection points with
  $\ell_a$, $ \ell_b$ and $\ell_c$ change continuously -- and in
  particular the order in which the line meets $\ell_a$, $ \ell_b$ and
  $\ell_c$ remains unchanged -- \emph{except} as the $\ell$ reaches
  one of the special line $\ell_s^*$: when that happens, the
  intersection with $\ell_s$ jumps from one ``end'' of $\ell$ to its
  other ``end''. Formally, as $\ell$ passes one of $\ell_a^*$,
  $\ell_b^*$ and $\ell_c^*$, the order in which the moving line
  intersects $\ell_a$, $ \ell_b$ and $\ell_c$ changes by a circular
  permutation.

  Now consider the set of lines $\gamma = \{\ell(t) \colon 0 \le t \le
  1\}$. This is an arc contained in $R$, bounded by $\ell_1$ and
  $\ell_2$ and, by assumption, not containing $\ell_a^*$. It follows
  that $\ell_1$ and $\ell_2$ realize the same geometric permutation of
  $\{a,b,c\}$ if and only if $\gamma$ contains neither $\ell_b^*$ nor
  $\ell_c^*$. This is precisely the reformulation of the statement.
\end{proof}

Let us now get back to proving that the tagged patterns 5 and 6 are
forbidden. Notice that in both cases, the directed lines $\ell_y$ and
$-\ell_z$ induce the same permutations. Thus, by
Observation~\ref{boloid} there is a continuous family of directed
lines $\{\ell(t) : 0\leq t\leq 1\}$ that are all transversal to the
segments $a$, $b$ and $c$ in the same order; moreover, $\ell(0) =
\ell_y$ and $\ell(1) = -\ell_z$. Next notice that $\ell_y$ and
$-\ell_z$ lie on opposite sides of the line $\ell_x$ (that is,
$\ell_y\odot\ell_x \neq (-\ell_z )\odot \ell_x$), so by continuity
there exists $0<t'<1$ such that the directed line $\ell' = \ell(t')$
intersects the line $\ell_x$. We now apply Observation
\ref{planar-obs} to the pair of lines $\ell_x$ and $\ell'$.

We start with Case 5. The line $\ell'$ intersects the segments in the
order $b\prec a \prec c$, and we will consider where the intersection
point $P = \ell_x\cap \ell'$ fits into this ordering. If the order of
intersection along $\ell'$ is $b\prec a \prec c \prec P$, then we can
apply Observation \ref{planar-obs} with $\ell'$ as the $x$-axis and
$\ell_x$ as the $y$-axis. We may therefore assume that $P\prec c$. If
we in addition have $b \prec P$, then it follows that the
$x$-coordinate of $P$ is non-negative. To see this observe that the
$x$-coordinates of the points on segments $[b_y,b_z]$ and $[c_y,c_z]$
are between $0$ and $1$, so the same holds for any point on the line
$\ell'$ which is between $b$ and $c$, in particular for the point
$P$. Since the points $a_x,b_x,c_x$ all have negative $x$-coordinates
it follows that the line $\ell_x$ intersects the points in the order
$a\prec b\prec c\prec P$, and we may therefore apply Observation
\ref{planar-obs} with $\ell_x$ as the $x$-axis and $\ell'$ as the
$y$-axis. The final situation to be considered is if $\ell'$
intersects the points in the order $P\prec b \prec a\prec c$. However
the plane $\{y + z = 1\}$ strictly separates the segment $[b_y,b_z]$
from $[a_y,a_z]\cup [c_y,c_z]\cup \ell_x$, and so if the order along
$\ell'$ was $P\prec b \prec a\prec c$, this would force $\ell'$ to
intersect the plane $\{y = z+1\}$ twice, a contradiction.

Now for Case 6 which is similar. The line $\ell'$ intersects the segments in the order $b\prec c\prec a$ and we want to place the intersection point $P=\ell_x\cap \ell'$ in this ordering. If we have $b\prec c \prec a \prec P$ then we can apply Observation \ref{planar-obs} with $\ell'$ as the $x$-axis and $\ell_x$ as the $y$-axis. If we have $b\prec P \prec a$, then the $x$-coordinate of the point $P$ is at most $1$ (since the $x$-coordinates of the points on the segments $[b_y,b_z]$ and $[a_y,a_z]$ are between $0$ and $1$). We may therefore apply Observation \ref{planar-obs} with $\ell_x$ as the $x$-axis and $\ell'$ as the $y$-axis. The final situation is the order $P\prec b\prec c\prec a$ which is impossible since the plane $\{y +z = 1\}$ separates the segment $[b_y,b_z]$ from $[a_y,a_z]\cup [c_y,c_z] \cup \ell_x$.

\bibliographystyle{plainurl}
\bibliography{ref}

\appendix

\section{Additional material}
\label{a:raw}
    
We provide the following additional material:    

\begin{itemize}
    \item The python code. See the \texttt{README.txt} file for how to invoke it.
    \medskip
    \item Some output files summarizing the results of our experiments:
    \begin{itemize}
        \item The list of triples of permutations of size five with, for each, the coordinates of a geometric realization. The triples are normalized in the sense defined in Appendix~\ref{a:forbid6}.
        \item The list of triples of size six with, for each, either the coordinates of a geometric realization or the statement that none exists.
    \end{itemize}
\end{itemize}
    
\noindent We make this material available from

\begin{quote}
  \url{https://hal.inria.fr/hal-02050539/file/Code.tar.gz}.
\end{quote}

\section{Code}
\label{a:code}

We present here the implementation that we used for our experiments. We do not reproduce the entire code as it is not so informative, but explain how its key functions operate. Readers interested in the full code can get it from the url given in Appendix~\ref{a:raw}.

\subsection{\texttt{compute\_realization\_tagged}}
\label{a:exception}

The function \texttt{compute\_realization\_tagged} is the main algorithm. It checks whether a given triplet of tagged permutations is canonically realizable, and output a realization if the answer is positive. 

\subparagraph{Encoding.}

Permutations of $\{0,\ldots,n-1\}$ are encoded as lists. Tagged are encoded separately as pairs $[z,o]$ where $z$ (resp. $o$) is the position of the zero (resp. of the one), defined as the number of elements smaller than zero (resp. one). For instance $23|^{\bf 0}0|^{\bf 1}1$ is encoded with the permutation $[2,3,0,1]$ and the tags $[2,3]$.

Our algorithm first consists of collecting the information that is missing to compute the sign vectors of Guigue-Devillers' algorithm in a set $U$. As stated in Lemma~\ref{l:U}, it only contains specific pairs, such as $x_k-f(z_k)$, which are encoded by the number associated with~$x_k$. The numbers of $x_i$, $y_j$ and $z_k$ are $i$, $n+j$ and $2n+k$, respectively.

In $M$ we keep the transitive closure of the adjacency matrix of the graph that collects the orientations induced by the tagged permutations and by the current orientations of the elements in $U$. This graph represents the poset of the relative orders of the variables greater than one: if $x_2$ is between $0$ and $1$, the associated vertex number $2$ represents $f(x_2)$ and if $y_1$ is smaller than $0$, the associated vertex $n+1$ represents $f^2(y_1)$.

\subparagraph{Auxiliary functions.}

The function \texttt{compute\_realization\_tagged} uses the following simple algorithms, for which we do not give the code here:

\begin{itemize}
    \item \texttt{associated\_vertex(v)} is the only variable $w$ such that
    $v-f(w)$ can appear in $U$.
    \item \texttt{compute\_base\_graph} build the adjacency matrix of the graph with an edge $v\rightarrow w$ whenever $v$ and $w$ are on the same line, in the same interval, and $v$ is before $w$ in the associated permutation.
    \item \texttt{transitive\_acyclic(M)} add an edge $v\rightarrow w$ whenever there is a non-trivial path from $v$ to $w$, using Warshall algorithm. Halts and return false if $M$ is not acyclic.
    \item \texttt{add\_edge\_closure(M,v,w)} add an edge $v\rightarrow w$ and update the transitive closure.
    \item \texttt{topological\_sort(M)} compute an ordering of the vertices compatible with the graph, using the classical depth-first algorithm.
\end{itemize}

\subparagraph{Code of the function.}

~ \bigskip

\begin{lstlisting}[language=Python]
def compute_realization_tagged(perm_triplet, tag_triplet):
    Px, Py, Pz = perm_triplet
    n = len(Px)
    Ix, Iy, Iz = inverse_perm(Px), inverse_perm(Py), inverse_perm(Pz)
    inv_triplet = [Ix, Iy, Iz]

    # precompute the interval of each vertex: 
    # 0 if <0, 1 if between 0 & 1, and 2 if >1
    regions = [compute_region(v, inv_triplet, tag_triplet) 
               for v in range(3*n)]

    # initialization of the adjacency matrix of the graph 
    M = [[False]*(3*n) for _ in range(3*n)]

    # the base matrix containing information from the perms and tags
    B = compute_base_graph(perm_triplet, tag_triplet)

    # SV is a map, s.t. SV[(i,j)] is the sign vector of (i,j),
    # which is a 6-tuple containing pairs of the form (sign, unknown):
    # sign = +- 1 and unknown = None if the sign is determined;
    # otherwise, unknown = v if the value is the sign of sign*(v-f(w)) 
    # U is the set of unknown and required v-f(w), 
    # identified by their v (w is associated_vertex(v))
    SV, U = {}, set([])
    for i, j in combinations(range(n), 2):  
        SV[(i, j)] = list()
        for l in range(3):
            SV[(i, j)].append(compute_initial_orientation(i, l, j, 
                inv_triplet, regions, U))
        for l in range(3):
            SV[(i, j)].append(compute_initial_orientation(j, l, i,
                inv_triplet, regions, U))
    m = len(U)

    # main loop: iteration over the orientations of the elements of U
    for it in range(2**m):  
        # copy the base B into M
        for ii, jj in product(range(3*n),range(3*n)):  
            M[ii][jj] = B[ii][jj]
        # if the i-th bit of "it" is 1 then v < f(w) for the i-th v 
        # of U and v > f(w) otherwise
        j = it
        for v in U:
            w = associated_vertex(v, n)
            if j % 2 == 1:
                M[v][w] = True
            else:
                M[w][v] = True
            j //= 2  # continue the bit decomposition
        if not transitive_acyclic(M):
            # the graph is not acyclic, we try the next choice of
            # orientations for the elements in U
            continue
        for i, j in combinations(range(n), 2):  
            # check whether they are pairwise disjoint
            intersection = False # used to halt the loop if needed
            # compute the sign vector of (i,j), now that 
            # we have all the required information stored in M
            sv = compute_sign_vector(SV[(i, j)], M)

            # we now apply the Devillers-Guigue algorithm
            if sv[0] == sv[1] == sv[2] or sv[3] == sv[4] == sv[5]:
                # the two triangles are disjoint, go to the next pair
                continue
            # compute the numbers associated with the variables
            Xi, Yi, Zi, Xj, Yj, Zj = i, i+n, i+2*n, j, j+n, j+2*n
            # order the points as in the Devillers-Guigue's algorithm
            sign_i, sign_j = sv[0], sv[3]
            if sv[0] == sv[1]:
                Xj, Yj, Zj = Zj, Xj, Yj
                sign_i = sv[2]
            elif sv[0] == sv[2]:
                Xj, Yj, Zj = Yj, Zj, Xj
                sign_i = sv[1]
            if sv[3] == sv[4]:
                Xi, Yi, Zi = Zi, Xi, Yi
                sign_j = sv[5]
            elif sv[3] == sv[5]:
                Xi, Yi, Zi = Yi, Zi, Xi
                sign_j = sv[4]
            if sign_i == -1:
                Yi, Zi = Zi, Yi
            if sign_j == -1:
                Yj, Zj = Zj, Yj
            # the points have been permuted so that their sign vector 
            # is now (1,-1,-1,1,-1,-1). We (try to) compute the two 
            # orientations of the algorithm:
            # each O is either sign, None if determined
            # or sign, [v,w] if it has the sign of sign*(v-f(w))
            O1 = compute_last([Xi,Yi,Xj,Yj], M, inv_triplet, regions)
            O2 = compute_last([Xi,Zi,Zj,Xj], M, inv_triplet, regions)
            if O1[1] is not None and O2[1] is not None:  
                # for some reason, this case never happen
                raise Exception("Both orientations are unknown")
            if O1[1] is None and O1[0] == 1:
                continue  # the two triangles are disjoint
            if O2[1] is None and O2[0] == 1:
                continue  # the two triangles are disjoint
            if O1[1] is None and O2[1] is None:  
                # they are both -1 at this point => intersection
                intersection = True
                break  # the (i,j) loop
            if O1[1] is None: 
                # O1 is -1 and O2 is unknown: it has to be 1 
                v, w = O2[1]
                if O2[0] == 1:
                    add_edge_closure(M, w, v)  # v - f(w) > 0
                else:
                    add_edge_closure(M, v, w)  # v - f(w) < 0
            if O2[1] is None: 
                # O2 is -1 and O1 is unknown: it has to be 1
                v, w = O1[1]
                if O1[0] == 1:
                    add_edge_closure(M, w, v)  # v - f(w) > 0
                else:
                    add_edge_closure(M, v, w)  # v - f(w) < 0
        if intersection:
            # a break occurred because of a detected intersection
            continue
        # at this stage a realization is possible
        L = topological_sort(M)
        return output_realization(L, regions, n)
    return None
\end{lstlisting}
\bigskip

Notice line 94 of the code, which sets up an exception. The fact that it is never raised as we examine all tagged patterns on two elements is what proves Lemma~\ref{l:garcimore}.

\subsection{\texttt{compute\_initial\_orientation} and \texttt{compute\_last}}

These two functions are used to compute orientations. Note that, because of the very specific underlying determinants, we compute them "by hand" instead of relying on computer algebra system.

\subparagraph{First step (vectors $v(i,j)$).}

The first one is called while building the sign vectors, collecting unknowns in $U$. We are therefore only interested in orientations of $(X_i,Y_i,Z_i,L_j)$, which factorizes as $(z_i-z_j)(x_i-1)(y_i-f(x_i))$ or alike. The sign of the first factor is given by a permutation, the one of the second factor by the tags, so it only remains to compare $y_i$ to $f(z_i)$: if they are not in the same interval it is determined, otherwise we add it in $U$.

\bigskip

\begin{lstlisting}[language=Python]
def compute_initial_orientation(i, l, j, inv_triplet, regions, U):
    """Computes the orientation of one entry (Xi, Yi, Zi, lj).
    If it is determined by the tagged permutations, returns sign, None 
    Otherwise, it is of the form sign*sign_of(v-f(w)),with sign=+-1 
    and it returns sign, v;
    v is then added to the set U which collects the missing information"""
    n = len(inv_triplet[0])
    s, t, u = l, (l+1) % 3, (l+2) % 3  # the three axis in direct order
    if inv_triplet[s][i] > inv_triplet[s][j]:  # (si - sj) > 0
        sign = 1
    else:
        sign = -1
    if regions[t*n+i] < 2:  # (ti - 1) < 0
        sign = -sign
    v, w = i + u*n, i + t*n  # encode of ui and ti as numbers
    r1, r2 = regions[v], (regions[w] + 1) % 3  # regions of v and f(w)
    if r1 == r2:  # v and f(w) are in the same region, sign unknown
        U.add(v)
        return sign, v
    if r1 > r2:  # ui-f(ti) > 0
        return sign, None
    return -sign, None
\end{lstlisting}

\subparagraph{Second step (relabelling and final condition).}

The second one is more complicated, since it is called when computing the last two orientations of Devillers-Guigue's algorithm and any situation can occur. So we first sort the points (while computing the number of inversions performed to have a correct sign), then refer to Table~\ref{t:decomp}. Once again, either we have the required information directly or in $M$, or we output the missing $v-f(w)$ as a pair $(v,w)$. Note that the code is quite long on purpose, to stick with Table~\ref{t:decomp}: a more factorized code would be more difficult to read.

\bigskip

\begin{lstlisting}[language=Python]
def compute_last(O, M, inv_trip, regions):
    """Try to compute the orientation of O using M
    returns sign, None if successful where sign is the orientation
    returns sign, [v,w] otherwise, when the orientation is sign*(v-f(w))"""
    n = len(M) // 3
    # we first the variables while keeping (-1)^swaps in the variable sign
    # this way, we are in one of the cases described in the article
    sign = insertion_sort(O)
    line, index = [x // n for x in O], [x % n for x in O]

    if line[0] == line[1] and line[2] == line[3]:  
        # case (xa-xb)(yc-yd) or alike
        if inv_trip[line[0]][index[0]] < inv_trip[line[1]][index[1]]:
            sign = -sign  # (xa-xb) < 0
        if inv_trip[line[2]][index[2]] < inv_trip[line[3]][index[3]]:
            sign = -sign  # (yc-yd) < 0
        return sign, None
    if line[0] == line[1]:  # case [Xa,Xb,Yc,Zd]
        if inv_trip[line[0]][index[0]] < inv_trip[line[1]][index[1]]:
            sign = -sign  # (Xa-Xb) < 0
        if regions[O[2]] < 2:
            sign = -sign  # (Yc-1) < 0
        r3 = regions[O[3]]
        r2 = (regions[O[2]]+1) % 3
        if r3 < r2 or M[O[3]][O[2]]: # Zd < f(Yc)
            return -sign, None
        elif r3 > r2 or M[O[2]][O[3]]:  # Zd > f(Yc)
            return sign, None
        else:
            return sign, [O[3], O[2]]
    if line[1] == line[2]:  # case [Xa,Yb,Yc,Zd]
        sign = -sign
        if inv_trip[line[1]][index[1]] < inv_trip[line[2]][index[2]]:
            sign = -sign  # (yb-yc) < 0
        if regions[O[3]] < 2:
            sign = -sign  # (zd-1) < 0
        r0 = regions[O[0]]
        r3 = (regions[O[3]]+1) % 3
        if r0 < r3 or M[O[0]][O[3]]: # Xa < f(Zd)
            return -sign, None
        elif r0 > r3 or M[O[3]][O[0]]:  # Xa > f(Zd)
            return sign, None
        else:
            return sign, [O[0], O[3]]
    if line[2] == line[3]:  # case [Xa,Yb,Zc,Zd]
        if inv_trip[line[2]][index[2]] < inv_trip[line[3]][index[3]]:
            sign = -sign  # (zc-zd) < 0
        if regions[O[0]] < 2:
            sign = -sign  # (xa-1) < 0
        r1 = regions[O[1]]
        r0 = (regions[O[0]]+1) % 3
        if r1 < r0 or M[O[1]][O[0]]: # Yb < f(Xa)
            return -sign, None
        elif r1 > r0 or M[O[0]][O[1]]:  # Yb > f(Xa)
            return sign, None
        else:
            return sign, [O[1], O[0]]
\end{lstlisting}

\subsection{\texttt{output\_realization}}

We conclude this presentation of the main parts of the code with function 
\texttt{output\_realization}, which computes a realization. It simply ranges through the vertices in the order given by the topological sort, and assigns a value from $2$ to $3n+2$ to the associated variable known to be larger than $1$: it is then brought back to the correct interval by applying $f$ or $f^2$ if needed.
\bigskip

\begin{lstlisting}[language=Python]
def output_realization(L, regions, n):
    """Computes coordinates that match the order L and the  regions
    returns an array of strings C, where C[v] is the value of v on its line (as a fraction)"""
    C = [None] * (3 * n)
    i = 2
    for v in L:
        if regions[v] == 2:
            C[v] = Fraction(i)
        elif regions[v] == 0:
            C[v] = Fraction(-1, i - 1)
        else:
            C[v] = Fraction(i - 1, i)
        i += 1
    return C
\end{lstlisting}

\section{Forbidden triples of size six}
\label{a:forbid6}

We finally present \emph{in a normalized form} the complete list of triples of permutations of size 6 that are not geometrically realizable in $\R^3$. 

\bigskip

The normalization goes as follows. Recall that we write permutations as words, so we can order them using the lexicographic order on the associated words. Starting from a triple $(P_1,P_2,P_3)$ of geometric permutations, consider the six triples obtained by the following method:

\begin{itemize}
    \item Choose $P_1$, $P_2$, $P_3$ or one of their reverse as first permutation.
    \item Relabel so that this first permutation becomes the identity, and propagate the relabeling to the other two permutations and their reverses.
    \item Pick as second permutation the lexicographically smallest one among the two remaining (relabelled) permutations and their (relabelled) reverses.
    \item Pick as third permutation the lexicographically smallest one among the remaining (relabelled) permutation and its (relabelled) reverse.
\end{itemize}

\noindent
The normalization of $(P_1,P_2,P_3)$ is, among these six triples, the one that is lexicographically smallest (smallest first permutation, then among tied, smallest second permutation, etc.).

\bigskip

The next table lists the normalized triples of permutations that are not geometrically realizable in $\R^3$.

{\small
\begin{table}[!h]
\begin{center}
\begin{tabular}{|c|c|c|c|c|c|}
\hline
013524 104523&
013524 140523&
013524 325014&
013542 104523&
013542 140523&
013542 325014\\

014352 105243&
014523 102534&
014523 103425&
014523 103524&
014523 120534&
014523 130524\\

014523 210534&
014523 243015&
014532 102534&
014532 103524&
014532 105243&
014532 105342\\

014532 120534&
014532 130524&
014532 210534&
014532 310524&
015342 104253&
021453 102534\\

021453 120534&
023514 210534&
023514 210543&
023541 210534&
023541 210543&
024135 201453\\

024135 210543&
024135 215043&
024153 120534&
024153 210543&
024153 215043&
024315 201543\\

024315 210453&
024315 210543&
024315 215043&
024351 210453&
024351 210543&
024351 215043\\

024513 120534&
024513 210534&
024513 210543&
024513 215043&
024531 120534&
024531 210534\\

024531 210543&
024531 215043&
025134 201453&
025134 201543&
025134 204153&
025134 210453\\

025314 201453&
025314 204153&
025314 204513&
025314 210453&
025314 210543&
025314 214053\\
\hline
025314 214503&
025341 201453&
025341 204153&
025341 204513&
025341 210453&
025341 210543\\

025341 214053&
025341 214503&
032451 230154&
032451 230514&
032514 210534&
032514 210543\\

032514 234015&
032541 210534&
032541 210543&
034125 103254&
034125 104523&
034125 130524\\

034125 135024&
034125 140523&
034152 103254&
034152 104523&
034152 130524&
034152 140523\\

034152 230154&
034152 230514&
034152 240513&
034215 103254&
034215 104523&
034215 230154\\

034215 230514&
034215 235014&
034215 235104&
034251 103254&
034251 104523&
034251 230154\\

034251 230514&
034251 240153&
034251 240513&
034251 245013&
034512 103254&
034512 104523\\

034512 130524&
034512 140523&
034512 230154&
034512 230514&
034512 240513&
034512 310524\\

034512 325014&
034521 103254&
034521 104523&
034521 130524&
034521 140523&
034521 230154\\

034521 230514&
034521 240153&
034521 240513&
034521 245013&
034521 310524&
034521 325014\\
\hline
035214 143025&
035214 145203&
035214 201543&
035214 205143&
035214 210543&
035214 250143\\

035214 254013&
035214 254103&
035241 201543&
035241 205143&
035241 210543&
035241 250143\\

042315 201453&
042315 201543&
042315 205143&
042315 210543&
042315 215043&
042351 201453\\

042351 201543&
042351 204153&
042351 204513&
042351 205143&
042351 205413&
042351 210453\\

042351 210543&
042351 214053&
042351 215043&
042351 240153&
042351 240513&
042351 241053\\

042351 245013&
042351 245103&
042513 153042&
042513 210543&
042513 215043&
042531 201543\\

042531 210534&
042531 210543&
042531 215043&
043215 103254&
043215 104523&
043215 201453\\

043215 201543&
043215 204153&
043215 204513&
043215 205143&
043215 210543&
043215 240513\\

043251 103254&
043251 201453&
043251 201543&
043251 204153&
043251 204513&
043251 205143\\

043251 210453&
043251 210543&
043251 214053&
043251 214503&
043251 215043&
043251 215403\\
\hline
043251 230154&
043251 240153&
043251 240513&
043251 245013&
043251 250143&
043251 251043\\

043251 251403&
043251 254013&
043251 254103&
043512 103254&
043512 104523&
043512 140523\\

043512 204513&
043512 240513&
043512 310524&
043512 325014&
043521 104523&
043521 201453\\

043521 204153&
043521 204513&
043521 210453&
043521 214053&
043521 214503&
043521 230154\\

043521 240153&
043521 240513&
043521 245013&
043521 310524&
043521 325014&
052341 201453\\

052341 201543&
052341 204153&
052341 210453&
052341 210543&
052341 214053&
052341 214503\\

052341 215043&
053241 201453&
053241 201543&
053241 204153&
053241 205143&
053241 210453\\

053241 210543&
053241 215043&
053241 245013&
102453 210534&
103254 210543&
103254 215043\\

103254 215403&
103524 254103&
103542 325014&
104352 310524&
104523 210534&
104523 215034\\

104523 215304&
104523 251034&
104532 310524&
120534 201453&
120534 204153&
120534 204513\\
\hline
120534 240513&
120534 245013&
120543 230514&
120543 235014&
120543 253014&
125034 204153\\

125034 204513&
125034 240513&
125034 245013&
125043 230514&
125304 204153&
125304 204513\\

125304 240153&
125304 240513&
125304 245013&
130524 210543&
130524 215043&
130542 320514\\

130542 325014&
135024 210543&
135024 215043&
135204 210543&
&\\
\hline
\end{tabular}
\caption{Pairs of permutations of size $6$ that form a forbidden triple together with $012345$. These are all forbidden triples up to normalization.}
\end{center}
\end{table}
}

\end{document}